\let\ssection=\section
\renewcommand{\section}{\setcounter{equation}{0}\ssection}
\newcommand{\la}{{\langle}}
\newcommand{\ra}{{\rangle}}
\newcommand{\bara}{{\bar{a}}}
\newcommand{\barA}{{\bar{A}}}
\newcommand{\Ad}{\mathrm{Ad}}
\newcommand{\barb}{{\bar{b}}}
\newcommand{\barB}{{\bar{B}}}
\newcommand{\barc}{{\bar{c}}}
\newcommand{\Coad}{{\mathrm{Coad}}}
\newcommand{\bard}{{\bar{d}}}
\newcommand{\dirsupp}{u}
\newcommand{\be}{{\mathbf{e}}}
\newcommand{\se}{\mathrm{e}}
\newcommand{\he}{\hat{e}}
\newcommand{\rE}{\mathrm{E}}
\newcommand{\GL}{{\mathrm{GL}}}
\newcommand{\rg}{\mathrm{g}}
\newcommand{\fg}{\mathfrak{g}}
\newcommand{\bdirsupp}{{\mathbf{\dirsupp}}}
\newcommand{\cM}{\mathcal{M}}
\newcommand{\sfn}{\mathsf{n}}
\newcommand{\barn}{{\bar{n}}}
\newcommand{\hnabla}{\widehat{\nabla}}
\newcommand{\SO}{\mathrm{SO}}
\newcommand{\so}{\mathrm{o}}
\newcommand{\cO}{{\mathcal{O}}}
\newcommand{\bomega}{\boldsymbol{\omega}}
\newcommand{\bhomega}{\widehat{\boldsymbol{\omega}}}
\newcommand{\homega}{\widehat{\omega}}
\newcommand{\hOmega}{\widehat{\Omega}}
\newcommand{\bP}{{\bf P}}
\newcommand{\hP}{{\widehat{P}}}
\newcommand{\bq}{{\mathbf{q}}}
\newcommand{\hQ}{{\widehat{Q}}}
\newcommand{\bbR}{\mathbb{R}}
\newcommand{\hR}{{\widehat{R}}}
\newcommand{\bS}{{\mathbf{S}}}
\newcommand{\SE}{\mathrm{SE}}
\newcommand{\sign}{\mathrm{sign}}
\newcommand{\TMM}{TM\!\setminus\!M}
\newcommand{\Tr}{\mathrm{Tr}}
\newcommand{\hu}{{\hat{u}}}
\newcommand{\bhu}{\mathbf{\hat{u}}}
\newcommand{\bu}{{\mathbf{u}}}
\newcommand{\hdirsupp}{{\hat{\dirsupp}}}
\newcommand{\Vect}{\mathrm{Vect}}
\newcommand{\Vol}{\mathrm{Vol}}
\newcommand{\vol}{\mathrm{vol}}
\newcommand{\bvol}{\mathbf{vol}}
\newcommand{\dv}{\dot{v}}
\newcommand{\bx}{{\mathbf{x}}}
\newcommand{\dx}{\dot{x}}
\newcommand{\by}{{\mathbf{y}}}
\newcommand{\dy}{\dot{y}}
\newcommand{\half}{\frac{1}{2}}
\newtheorem{thm}{Theorem}[section]
\newtheorem{lem}[thm]{Lemma}
\newtheorem{cor}[thm]{Corollary}
\newtheorem{pro}[thm]{Proposition}
\newtheorem{rmk}[thm]{Remark}
\newtheorem{defi}[thm]{Definition}
\newtheorem{ax}[thm]{Axiom}
\newcommand{\KEYWORDS}[1]{\textbf{Keywords}: #1}
\newcommand{\AMSCLASSIFICATION}[1]{\textbf{Mathematics Subject Classification 2000}: #1}
\newcommand{\PREPRINT}[1]{\textbf{Preprint}: #1}
\title{Finsler Spinoptics}
\author{
C. DUVAL\footnote{mailto: duval@cpt.univ-mrs.fr}\\\\
Centre de Physique Th\'eorique, CNRS, 
Luminy, Case 907\\ 
F-13288 Marseille Cedex 9 (France)\footnote{ 
UMR 6207 du CNRS associ\'ee aux 
Universit\'es d'Aix-Marseille I et II et Universit\'e du Sud Toulon-Var; Laboratoire 
affili\'e \`a la FRUMAM-FR2291
}
}
\date{March 14 2008}
\begin{document}

\maketitle
\thispagestyle{empty}

\abstract{
The objective of this article is to build up a general theory of geometrical optics for spinning light rays in an inhomogeneous and anisotropic medium modeled on a Finsler manifold. The prerequisites of local Finsler geometry are reviewed together with the main properties of the Cartan connection used in this work. Then, the principles of Finslerian spinoptics are formulated on the grounds of previous work on Riemannian spinoptics, and relying on the generic coadjoint orbits of the Euclidean group. A new presymplectic structure on the indicatrix-bundle is introduced, which gives rise to a foliation that significantly departs from that generated by the geodesic spray, and leads to a specific anomalous velocity, due to the coupling of spin and the Cartan curvature, and related to the optical Hall effect.
}

\vskip 6cm
\KEYWORDS{Presymplectic manifolds, Geometrical spinoptics, Finsler structures}

\medskip

\AMSCLASSIFICATION{78A05, 70G45, 58B20}

\medskip

\PREPRINT{CPT-P29-2007}

\newpage

\tableofcontents

\baselineskip=22pt


\section{Introduction}

Geometry and optics have maintained a lasting relationship since Euclid's Optics where light rays were first interpreted as oriented straight lines in space (or, put in modern terms, as oriented, non parametrized, geodesics of Euclidean space). One can, withal, trace back the origin of the calculus of variations to Fermat's Principle of least optical path. This principle has served as the basis of \textit{geometrical optics} in inhomogeneous, isotropic, media and proved a fundamental mathematical tool in the design of optical (and electronic) devices such as mirrors, lenses, etc., and in the understanding of caustics, and optical aberrations. 

Although Maxwell's theory of wave optics has unquestionably clearly superseded geometrical optics as a bona fide theory of light, the seminal work of Fermat has opened the way to wide branches of mathematics, physics, and mechanics, namely, to the calculus of variations in the large, modern classical (and quantum) field theory, Lagrangian and Hamiltonian or presymplectic mechanics.

It should be stressed that Fermat's Principle has, in essence, a close relationship to modern \textit{Finsler geometry}, as it rests on a specific ``Lagrangian''
\begin{equation}
F(x,y)=\sfn(x)\sqrt{\delta_{ij}\,y^iy^j}
\label{FermatLagangian}
\end{equation}
where $y=(y^i)$ stands for the ``velocity'' of light, and $\sfn(x)>0$ for the value of the (smooth) refractive index of the medium, at the ``location'' $x$ in Euclidean space. (Note that Einstein's summation convention is tacitly understood throughout this article.)
As a matter of fact, the function (\ref{FermatLagangian}) is a Finsler metric, namely a positive function, homogeneous of degree one in the velocity, smooth wherever $y\neq0$, and such that the Hessian $\rg_{ij}(x,y)=\left(\half{}F^2\right)_{y^iy^j}$ is positive-definite (see Section~\ref{SubSectionFinslerMetrics}). Although, this is a very special case of Finsler metric --- it actually defines a conformal\-ly flat Riemannian metric tensor, viz., $\rg_{ij}(x)=\sfn^2(x)\,\delta_{ij}$ ---, this fact is worth noting for further generalization. The geodesics of the Fermat-Finsler metric (\ref{FermatLagangian}) are a fairly good mathematical model for light rays in refractive, inhomogeneous, and non dispersive media --- provided polarization of light is ignored!

It has quite recently been envisaged to consider a general Finsler metric~$F(x,y)$ to describe anisotropy of optical media, as the Finsler metric tensor, $\rg_{ij}(x,y)$, depends, in general non trivially, on the direction of the velocity, $y$, or  ``\'el\'ement de support'' in the sense of Cartan \cite{Car}. This enables one to account for the fact that~\cite{AIM,Ing}
\begin{quote}
{\sl
In an anisotropic medium, the speed of light depends on its direction, and the unit surface is no longer a sphere.}
(Finsler, 1969)
\end{quote}
The Fermat Principle has also been reformulated in the pre\-symplectic framework in~\cite{CN1}, and generalized in \cite{CN2} to the context of anisotropic media. By the way, the regularity condi\-tion imposed, in the latter reference,  amounts to demanding a Finsler structure.

One thus takes for granted that oriented Finsler geodesics may describe light rays in an\-isotropic media.

\goodbreak

A Finslerian version of the Fermat Principle now states that the second-order differential equations governing the propagation of light stem from the geodesic spray of a (three-dimensional) Finsler space, $(M,F)$, given by the Reeb vector field of the contact $1$-form
\begin{equation}
\varpi=\omega_3,
\label{FinslerFermat2form}
\end{equation}
where $\omega_3=F_{y^i}dx^i$ is, here, the restriction to the indicatrix-bundle, $SM=F^{-1}(1)$, of the Hilbert $1$-form. See Section \ref{SectionGeometricalOpticsFinsler}, and also \cite{Fou}.

Now, geometrical optics is, from a different standpoint, widely accepted as a semi-classical limit \cite{BW} of wave optics with ``small parameter'' the reduced wavelength $\lambdabar$ (or~$\lambdabar/L$, where~$L$ is some characteristic length of the optical medium, see, e.g., \cite{BFK}). It has, however, recently been established on experimental grounds that trajectories of light beams in inhomogeneous optical media depart from those predicted by geometrical optics. See, e.g. \cite{BB04,BB05,Bli}, and \cite{OMN,OMNbis}, for several approaches to photonic dynamics in terms of a semi-classical limit of the Maxwell equations in inhomogeneous, and isotropic media, highlighting the Berry connection~\cite{Ber}. See also \cite{GBM}. The so-called \textit{optical Hall effect} for polarized light rays, featuring a very small transverse shift, ortho\-gonal to the gradient of the refractive index, has, hence, received a firm theoretical explanation.

From quite a different perspective, a theory of spinning light in arbitrary three-dimensional Riemannian  manifold has been put forward as an extension of the Fermat Principle to classical, circularly polarized photons, in inhomogeneous, (es\-sentially) isotropic, media. This theory of \textit{spinoptics}, presented in \cite{DHH}, and \cite{DHH2}, relies fundamentally on the Euclidean group, $\rE(3)$, viewed at the same time as the group of isometries of Euclidean space, $E^3$, and as the group of symmetries of classical states of free photons represented by Euclidean coadjoint orbits. Straight\-forward adaptation of the general relativistic prescription of minimal coupling~\cite{Kun,Sou2,Ste} readily yielded a set of differential equations governing the trajectories of spinning light in inhomogeneous, and isotropic media described by a Riemannian structure. Also did this formalism for spinoptics help to put the optical Hall effect in proper geometrical perspective, in agreement with \cite{OMN}.

\goodbreak

The main purport of the present article is, as might be expected, to try and provide a fairly natural extension of plain geometrical optics --- in non-dispersive, anisotropic, media described in terms of Finsler geodesics --- to spinoptics, i.e., to the case of \textit{circularly polarized light rays} carrying color and helicity in such general optical media. In doing so, one must unavoidably choose a linear Finsler connection from the start (see (\ref{FinslerFermatIntro})), the crux of the matter being that there is, apart from the special Riemannian case, no canonical Finsler connection at hand. The challenge may, in fact, be accepted once we take seriously the Euclidean symmetry as a guiding principle, a procedure that can be implemented by considering the dipole approximation to ordinary geometrical optics, namely the spinning coadjoint orbits of the Euclidean group. This is the subject of Section \ref{SectionFinslerSpinoptics} which contains the main results of this article, where the Finsler-Cartan connection prevailed definitely over other Finsler connections, as regard to the original, fundamental, Euclidean symmetry of the free model. Let us, however, mention that all resulting expressions, for the foliations we end up with, ultimately depend upon the Finsler metric tensor, the Cartan tensor, and the Chern curvature tensors only.

The hereunder proposed principle of \textit{Finsler spinoptics} (see Axiom \ref{AxiomSpinoptics}, which can be understood the prescription of minimal coupling to a Finsler-Cartan connection) thus amounts to consider, instead of (\ref{FinslerFermat2form}), the following $1$-form
\begin{equation}
\varpi = \omega_3+\lambdabar\,\homega_{12},
\label{FinslerFermatIntro}
\end{equation}
where $\lambdabar$ is the (signed) \textit{wavelength}, the $\homega_{ab}$, with $a,b=1,2,3$, representing the compo\-nents of the Cartan connection associated with  a three-dimensional Finsler manifold $(M,F)$. The $1$-form (\ref{FinslerFermatIntro}) might be considered as providing a \textit{deformation} of the Hilbert $1$-form driven by the wavelength parameter, $\lambdabar$. See Remark \ref{RemarkBerry} below.

\goodbreak

The characteristic foliation of the novel $2$-form $\sigma=d\varpi$ is explicitly calculated, and leads to a drastic deviation from the Finsler geodesic spray, dictated by spin-curvature coupling terms which play, in this formalism, quite a significant r\^ole, as expressed by Theorem \ref{MainTheorem}. Of course, the equations of spinoptics in a Riemannian manifold \cite{DHH} are recovered, as special case of those corresponding to a Finsler-Cartan structure. We assert that this foliation can be considered a natural extension of the Finsler geodesics spray to the case of spinoptics in Finsler-Cartan spaces.

The paper is organized as follows.

Section \ref{SectionFinslerStructures} provides a survey of local Finsler geometry. We found it necessary to offer a somewhat technical and detailed introduction of the objects pertaining to Finsler geometry, in particular to the various connections used throughout this article, to make the reading easier to non experts. Emphasis is put on the Chern and Cartan connections, as these turn out to be of central importance in this study. This section relies essentially on the authoritative Reference~\cite{BCS}.

In Section \ref{SectionGeometricalOpticsFinsler}, we review the principles of geometrical optics, extending Fermat's optics to the area of Finsler structures characterizing anisotropic optical media. Then, special attention is paid to the Hilbert $1$-form in the derivation of the Finsler geodesic spray. The connection of the latter to the Fermat differential equations associated with conformally related Finsler structures is furthermore analyzed.

Section \ref{SectionFinslerSpinoptics} constitutes the major part of the article. It presents, in some details, the basic structures arising in the classification of the $\SE(3)$-homogeneous symplectic spaces, which are interpreted as the seeds of spinoptics, namely the Euclidean coadjoint orbits labeled by color, and spin, according to the classic \cite{Sou}. The core of our study consists in the choice of a special Finsler connection, namely the Finsler-Cartan connection, to perform minimal coupling of spinning light particles to a Finsler metric. This is done and explained in this section, in which the derivation of the characteristic foliation of our distinguished presymplectic $2$-form $d\varpi$, see (\ref{FinslerFermatIntro}), is spelled out in detail. This completes the introduction of Finslerian spinoptics.
 
\goodbreak
 
Conclusions are drawn in Section \ref{SectionConclusion}, and perspectives for future work connected to the present study are finally outlined.

\medskip
\textbf{Acknowledgments}: It is a great pleasure to thank J.-C.~Alvares Paiva, S.~Tabachnikov, and P.~Verovic, for useful correspondence and enlightening discussions. Thanks are also due to P.~Horv\'athy for valuable advice.


\section{Finsler structures: a compendium}\label{SectionFinslerStructures}

\subsection{Finsler metrics}\label{SubSectionFinslerMetrics}

\subsubsection{An overview}

A Finsler structure is a pair $(M,F)$ where $M$ is a smooth, $n$-dimensional, manifold and $F:TM\to\bbR^+$ a given function whose restriction to the slit tangent bundle $TM\!\setminus\!M=\{(x,y)\in{}TM\,\vert\,y\in{}T_xM\!\setminus\!\{0\}\}$ is smooth, and (fiberwise) positively homogeneous of degree one, i.e., $F(x,\lambda y)=\lambda F(x,y)$, for all $\lambda>0$; one furthermore demands that the $n\times{}n$ Hessian matrix with entries 
\begin{equation}
\rg_{ij}(x,y)=\left(\half{}F^2\right)_{y^i y^j}
\label{gij}
\end{equation}
be positive-definite
for all $(x,y)\in\TMM$. 
The quantities $\rg_{ij}$ defined in (\ref{gij}) are (fiberwise) homogeneous of degree zero, and 
\begin{equation}
\rg=\rg_{ij}(x,y)dx^i\otimes{}dx^j
\label{g}
\end{equation}
defines a \textit{sphere's worth of Riemannian metrics} \cite{BR} on each $T_xM$ para\-metrized by the direction of $y$. We will put $(\rg^{ij})=(\rg_{ij})^{-1}$. If $\pi:\TMM\to{}M$ stands for the canonical surjection, the metric (or fundamental) ``tensor'' (\ref{g}) is, actually, a section of the bundle $\pi^*(T^*M)\otimes\pi^*(T^*M)\to\TMM$. 

The distinguished ``vector field'' (the direction of the \textit{the supporting element}) 
\begin{equation}
\dirsupp=\dirsupp^i\frac{\partial}{\partial{x^i}},
\qquad
\mathrm{where}
\qquad
\dirsupp^i(x,y)=\frac{y^i}{F(x,y)},
\label{ell}
\end{equation}
is, indeed, a section of $\pi^*(TM)\to\TMM$ such that 
$\rg(\dirsupp,\dirsupp)=\rg_{ij}\dirsupp^i\dirsupp^j=1$.

\goodbreak

There is, at last, another tensor specific to Finsler geometry, namely the \textit{Cartan tensor}, 
$C=C_{ijk}(x,y)dx^i\otimes{}dx^j\otimes{}dx^k$,
where
$C_{ijk}(x,y)
=\left(\frac{1}{4}F^2\right)_{y^i y^j y^k}$.
As in~\cite{BCS}, we will also use \textit{ad lib.} the quantities 
\begin{equation}
A_{ijk}=F\,C_{ijk},
\label{A}
\end{equation} 
which are totally symmetric, 
$A_{ijk}=A_{(ijk)}$,
and enjoy the following property, viz.,
\begin{equation}
A_{ijk}\,\dirsupp^k=0.
\label{Al=0}
\end{equation}

\goodbreak

There is a wealth of Finsler structures, apart from the well-known special case of Riemannian structures $(M,\rg)$ for which $F(x,y)=\sqrt{\rg_{ij}(x)y^iy^j}$. 
See, e.g., \cite{BCS,BR,Shen} for a survey, and for a list of examples of Finsler structures.
We will review below, see (\ref{Fe})--(\ref{Fpm}), examples of Finsler structure associated with optical birefringence~\cite{AIM}.

\subsubsection{Introducing special orthonormal frames}

Having chosen a coordinate system ($x^i$) of $M$, we denote --- with a slight abuse of notation --- by $\partial/\partial{x^i}$ (resp. $dx^i$) the so-called \textit{transplanted sections} of $\pi^*(TM)$ (resp. $\pi^*(T^*M)$), 
Accordingly we will denote by $\partial/\partial{y^i}$ (resp.~$dy^i$) the standard, \textit{vertical}, sections of~$T(\TMM)$ (resp. $T^*(\TMM)$).



Introduce now special $\rg$-orthonormal frames $(e_1,\ldots,e_n)$ for $\pi^*(TM)$, such that
\begin{equation}
\rg(e_a,e_b)=\delta_{ab},
\label{gab}
\end{equation}
for all $a,b=1,\ldots,n$, with, as preferred element, the distinguished section
\begin{equation}
e_n=\dirsupp.
\label{en}
\end{equation}
Recall that each $e_a$ lies in the fiber $\pi^*(TM)_{(x,y)}$ above $(x,y)\in\TMM$. 
The local decomposition of these vectors is given by
\begin{equation}
e_a=e_a^i\frac{\partial}{\partial{x^i}},
\label{eaLoc}
\end{equation}
for all $a=1,\ldots,n$, where the matrix $(e_a^i)$, defined at $(x,y)\in\TMM$, is nonsingular. 
We thus have
\begin{equation}
e_n^i=\dirsupp^i.
\label{eni}
\end{equation}

The dual frames, for $\pi^*(T^*M)$, which we denote by $(\omega^1,\ldots,\omega^n)$, are such that $\omega^a(e_b)=\delta^{a}_b$, for all $a,b=1,\ldots,n$. Accordingly, we have the local decompo\-sition
\begin{equation}
\omega^a=\omega^a_i dx^i,
\label{omegaaLoc}
\end{equation}
for all $a=1,\ldots,n$, where $(\omega^a_i)=(e_a^i)^{-1}$.
The $1$-form dual to $e_n$ is 
the \textsl{Hilbert form}
\begin{equation}
\omega_H=\omega^n
\label{Hilbert}
\end{equation}
which, in view of (\ref{ell}), reads
$\omega_H=\dirsupp_i\,dx^i$,
with
$\dirsupp_i=\rg_{ij}\dirsupp^j=F_{y^i}$. 


The following proposition introduces the principal bundle of orthonormal frames above the slit tangent bundle of a Finsler manifold.
\begin{pro}
The manifold, $\SO_{n-1}(\TMM)$, of special $\rg$-orthonormal frames for $\pi^*(TM)$ is endowed with a structure of $\SO(n-1)$-principal bundle over $\TMM$. If $\GL_+(M)$ stands for the bundle of positively oriented linear frames of $M$, it is defined by $\SO_{n-1}(\TMM)=\Psi^{-1}(0)$ where $\Psi:\pi^*(\GL_+(M))\to\bbR^{\half{}n(n+1)}\times\bbR^{n-1}$ is given by
$\Psi(x,y,(e_a))=((\rg_{(x,y)}(e_a,e_b)-\delta_{ab}),y-F(x,y)e_n)$,
where $a,b=1,\ldots,n$.
\end{pro}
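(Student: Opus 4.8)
The plan is to prove the two assertions in turn: first that $\Psi^{-1}(0)$ is a smooth embedded submanifold of $\pi^*(\GL_+(M))$ of the expected dimension, by checking that $0$ is a regular value of $\Psi$; and then that the tautological right action of $\SO(n-1)$ makes it a principal bundle over $\TMM$, by exhibiting smooth local sections. Before anything else I would record the homogeneity identity that underlies the whole argument: since $F$ is positively homogeneous of degree one, Euler's theorem together with $\dirsupp_i=F_{y^i}$ gives $\rg_{ij}(x,y)\,y^iy^j=F(x,y)^2$, so that $\rg_{(x,y)}(y,y)=F^2$ and the normalized supporting element $\dirsupp=y/F$ is automatically a $\rg_{(x,y)}$-unit vector. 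Note that this identity holds for every $(x,y)\in\TMM$, independently of the frame $(e_a)$.

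Next I would show that $d\Psi$ is surjective at every point of $\Psi^{-1}(0)$. Write $\Psi=(\Psi_1,\Psi_2)$ with $\Psi_1(x,y,(e_a))=(\rg_{(x,y)}(e_a,e_b)-\delta_{ab})$ valued in the symmetric matrices $\Symm(n)\cong\bbR^{\half n(n+1)}$, and $\Psi_2(x,y,(e_a))=y-F(x,y)\,e_n$. Varying the frame along $e_a\mapsto e_a+t\,v_a$ at fixed $(x,y)$, the differential of $\Psi_1$ sends $(v_a)$ to the symmetric matrix $\big(\rg(v_a,e_b)+\rg(e_a,v_b)\big)$; since the $(e_a)$ form a basis this is onto $\Symm(n)$, which is the classical fact that the $\rg_{(x,y)}$-orthonormal frames are cut out transversally. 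The differential of $\Psi_2$ in the $e_n$-direction is $v_n\mapsto-F\,v_n$; the point to check is that, on $\Psi_1^{-1}(0)$, imposing $\Psi_2=0$ is a condition of codimension $n-1$ rather than $n$. This is so because once $(e_a)$ is $\rg_{(x,y)}$-orthonormal the vector $e_n$ is forced to be a unit vector, so requiring it to coincide with the fixed unit vector $\dirsupp=y/F$ confines $e_n$ to a single point of an $(n-1)$-sphere; equivalently, the $e_n$-component of $y-Fe_n$ is redundant near the solution set, being determined by the $n-1$ transverse components together with $\rg(y,y)=F^2$ and $F>0$ (the positive orientation of $(e_a)$ selecting the branch $\rg(y,e_n)=+F$). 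Realizing an arbitrary symmetric perturbation by $v_1,\dots,v_{n-1}$ and the transverse part of $\Psi_2$ by $v_n$, I would conclude that $d\Psi$ is surjective, so $\Psi^{-1}(0)=\SO_{n-1}(\TMM)$ is a submanifold of codimension $\half n(n+1)+(n-1)$; subtracting from $\dim\pi^*(\GL_+(M))=2n+n^2$ yields $\dim\SO_{n-1}(\TMM)=2n+\half(n-1)(n-2)=\dim\TMM+\dim\SO(n-1)$, as required.

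For the bundle structure I would let $\SO(n-1)\hookrightarrow\SO(n)$ be the stabilizer of the last basis vector, acting on the right by rotating $(e_1,\dots,e_{n-1})$ among themselves while fixing the preferred vector $e_n=\dirsupp$. This action preserves both defining conditions, covers the identity of $\TMM$, and is manifestly free; it is transitive on each fibre because any two special frames over $(x,y)$ share the vector $\dirsupp$ and restrict to two equally oriented $\rg_{(x,y)}$-orthonormal bases of its orthogonal complement, which differ by a unique element of $\SO(n-1)$. Local triviality I would obtain from smooth local sections: starting from $e_n=\dirsupp$ and any smooth local frame of $M$, the Gram--Schmidt procedure with respect to $\rg_{(x,y)}$ produces a smooth special $\rg$-orthonormal frame over a neighbourhood in $\TMM$, using that $\rg$ depends smoothly on $(x,y)\in\TMM$ and that $F>0$ is smooth off the zero section. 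The main obstacle, and the only genuinely Finslerian point, is the regular-value step, in particular the bookkeeping that reduces the supporting-element condition to $n-1$ independent equations via $\rg(y,y)=F^2$; the remaining verifications (the action, its freeness and fibrewise transitivity, and the Gram--Schmidt sections) are routine, and the whole construction specializes to the usual oriented orthonormal frame bundle of $M$ when $F(x,y)=\sqrt{\rg_{ij}(x)y^iy^j}$.
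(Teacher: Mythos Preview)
The paper does not actually prove this proposition; it simply declares the proof ``straightforward'' and moves on. Your argument supplies the details the paper omits, and its overall architecture --- transversality of $\Psi$, then the free and fiberwise-transitive $\SO(n-1)$-action together with Gram--Schmidt local sections --- is the correct and standard one.

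Two small imprecisions are worth flagging. First, the parenthetical claim that the positive orientation of $(e_a)$ selects the branch $\rg(y,e_n)=+F$ is not right: replacing $(e_1,e_n)$ by $(-e_1,-e_n)$ preserves orientation but flips that sign. Your preceding clause already contains the correct argument: \emph{near the solution set} (where $e_n=\dirsupp$) the continuous function $\rg(y,e_n)$ is close to $F>0$, so the redundancy of the $e_n$-component of $\Psi_2$ holds locally without any appeal to orientation. Second, in the surjectivity step you cannot realize all of $\Symm(n)$ using $v_1,\dots,v_{n-1}$ alone, since $d\Psi_1(v)_{nn}=2\,\rg(v_n,e_n)$ depends only on $v_n$; the easy fix is to allow $v_n$ a nonzero $e_n$-component to hit the $(n,n)$ entry, while its transverse part still realizes the $n-1$ independent components of $\Psi_2$. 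With these two adjustments your proof goes through.
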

The proof is straightforward as is that of the following statement.
\begin{cor}\label{CorrSOSM}
Let $SM=F^{-1}(1)$ denote the indicatrix-bundle of a Finsler manifold~$(M,F)$, and $\iota:SM\hookrightarrow\TMM$ its embedding into the tangent bundle of $M$. The pull-back $\SO_{n-1}(SM)=\iota^*(\SO_{n-1}(\TMM))$ is a principal $\SO(n-1)$-bundle over~$SM$.
\end{cor}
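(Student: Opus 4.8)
The plan is to deduce the corollary directly from the preceding proposition by a standard pull-back (base-change) argument for principal bundles. The proposition establishes that $\SO_{n-1}(\TMM)$ is an $\SO(n-1)$-principal bundle over $\TMM$; the corollary merely restricts this bundle along the embedding $\iota:SM\hookrightarrow\TMM$, so the entire content reduces to the functoriality of principal bundles under pull-back, combined with the observation that $SM$ is a genuine embedded submanifold of $\TMM$.

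First I would verify that $SM=F^{-1}(1)$ is a smooth embedded submanifold of $\TMM$, so that $\iota$ is a genuine (smooth) embedding. Since $F$ is smooth on $\TMM$ and fiberwise positively homogeneous of degree one, Euler's relation gives $y^iF_{y^i}=F$, hence $dF$ does not vanish on $F^{-1}(1)$ (indeed $y^i\,\partial F/\partial y^i=1\neq0$ there, so $1$ is a regular value). The regular value theorem then shows $SM$ is a closed embedded submanifold of codimension one, and $\iota$ is the inclusion. This justifies that the pull-back $\iota^*$ is well-defined on bundles over $\TMM$.

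Next I would invoke the general fact that if $P\to B$ is a principal $G$-bundle and $f:B'\to B$ a smooth map, then the pull-back $f^*P=\{(b',p)\in B'\times P\mid f(b')=\rho(p)\}$ (with $\rho:P\to B$ the projection) is a principal $G$-bundle over $B'$, with $G$ acting on the second factor and projection onto the first. Applying this with $P=\SO_{n-1}(\TMM)$, $G=\SO(n-1)$, $B=\TMM$, and $f=\iota$ yields immediately that $\SO_{n-1}(SM)=\iota^*(\SO_{n-1}(\TMM))$ is a principal $\SO(n-1)$-bundle over $SM$. Concretely, its fiber over a point $v\in SM$ is the set of special $\rg$-orthonormal frames for $\pi^*(TM)$ at $\iota(v)$, with the induced free and transitive $\SO(n-1)$-action and local triviality inherited from that of $\SO_{n-1}(\TMM)$ via the local trivializations composed with $\iota$.

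I do not anticipate any serious obstacle here; the statement is essentially formal once the submanifold structure of $SM$ is in place, which is exactly why the authors call the proof ``straightforward.'' The only point requiring a line of justification is the regularity of $1$ as a value of $F$, which I would dispatch using the Euler homogeneity relation as above. Everything else is the routine verification that pull-back preserves the principal-bundle axioms (smoothness of the action, freeness, transitivity on fibers, and local triviality), which holds for any smooth base-change map and hence in particular for the embedding $\iota$.
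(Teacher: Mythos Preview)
Your proposal is correct and matches the paper's intent: the paper does not give an explicit proof of this corollary, merely declaring it ``straightforward'' as a consequence of the preceding proposition, which is exactly the pull-back/base-change argument you spell out. Your added verification that $1$ is a regular value of $F$ via Euler's relation is the only nontrivial point and is handled correctly.
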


\subsubsection{The non-linear connection}

Let us recall that the Finsler metric, $F$, induces in a canonical fashion a splitting of the tangent bundle of the slit tangent bundle $\pi:\TMM\to{}M$ of~$M$ as follows:
\begin{equation}
T(\TMM)=V(\TMM)\oplus{}H(\TMM),
\label{Splitting}
\end{equation} 
where the vertical tangent bundle is $V(\TMM)=\ker\pi_*$. The fibers, $V_{(x,y)}$, of that subbundle are spanned by the \textit{vertical} local basis vectors
$
\partial/\partial{}y^i,
$
with $i=1,\ldots,n$. Those, $H_{(x,y)}$, of the \textit{horizontal} subbundle $H(\TMM)$ are spanned by the \textit{horizontal} local basis vectors
\begin{equation}
\frac{\delta}{\delta{}x^i}=\frac{\partial}{\partial{}x^i}-N^{j}_{\;i}\frac{\partial}{\partial{}y^j},
\label{deltaxi}
\end{equation} 
with $i=1,\ldots,n$, where the $N_i^{\;j}$ are the coefficients of a \textit{non-linear connection} canonically defined by 
\begin{equation}
N^i_{\;j}=\half\frac{\partial{}G^i}{\partial{}y^j}
\label{Nij}
\end{equation} 
in terms of the \textit{spray} coefficients
$
G^j=\half\rg^{jk}\left((F^2)_{y^k{}x^l}\,y^l-(F^2)_{x^k}\right)
$.


The horizontal vectors, $\delta/\delta{}x^i$, see (\ref{deltaxi}), and the vertical vectors, $\partial/\partial{}y^i$, 
form a local natural basis for $T_{(x,y)}(\TMM)$ whose dual basis is given by $dx^i$ and $\delta{}y^i$, where
\begin{equation}
\delta{y^i}=dy^i+N^i_{\;j}dx^j.
\label{deltay}
\end{equation}

We now introduce, following (\ref{eaLoc}), (\ref{omegaaLoc})
the $\rg$-orthonormal frames that will be needed in the sequel.
\begin{defi}
We will call $\mathrm{hv}$-frame any frame for $T(\TMM)$, compatible with the splitting (\ref{Splitting}), namely \cite{BCS}
\begin{equation}
\he_a=e^i_a\frac{\delta}{\delta{}x^i},
\qquad
\qquad
\he_\bara=e^i_a\,F\frac{\partial}{\partial{}y^i},
\label{orthonormalBases}
\end{equation}
where $\bara=a+n$, with $a=1,\ldots,n$. The associated dual basis reads then
\begin{equation}
\omega^a=\omega_i^a\,dx^i,
\qquad
\qquad
\omega^\bara=\omega_i^a\,\frac{\delta{}y^i}{F},
\label{dualOrthonormalBases}
\end{equation}
with $a=1,\ldots,n$; see (\ref{omegaaLoc}) and (\ref{deltay}).
\end{defi}

Owing to the properties of the non-linear connection (\ref{Nij}), the metric, $F$, is horizontally constant, 
\begin{equation}
\frac{\delta{F}}{\delta{x^i}}=0, 
\label{deltaFdeltax=0}
\end{equation}
for all $i=1,\ldots,n$. 
This entails that
the vertical $1$-form $\omega^\barn$ is exact,
\begin{equation}
\omega^\barn=d\log{F}.
\label{omegabn}
\end{equation}

\subsection{Finsler connections}

Unlike the Riemannian case, there is no canonical linear Finsler connection on the bundle $\pi^*(TM)$, which is required as soon as one needs to differentiate tensor fields, i.e., sections of the bundles $\pi^*(TM)^{\otimes{}p}\otimes\pi^*(T^*M)^{\otimes{}q}$. 

\subsubsection{The Chern connection}

A celebrated example, though, is the \textit{Chern connection} $\omega^{\;i}_j=\Gamma^i_{\;jk}(x,y)dx^k$ which is uniquely defined by the following requirements \cite{BCS}:
(i) it is symmetric: $\Gamma^i_{\;jk}=\Gamma^i_{\;kj}$, and
(ii) it \textit{almost} transports the metric tensor: $dg_{ij}-\omega^{\;k}_ig_{jk}-\omega^{\;k}_jg_{ik}=2\,C_{ijk}\delta{y^k}$,
where the $\delta{}y^k$ are as in (\ref{deltay}).

\goodbreak


The Chern connection coefficients turn out to yield
the alternative expression of the non-linear connection, namely,
\begin{equation}
N^i_{\;j}(x,y)=\Gamma^i_{\;jk}y^k.
\label{NijChern}
\end{equation}

The co\-variant derivative $\nabla:\Gamma(\pi^*(TM))\to\Gamma(T(\TMM)\otimes\pi^*(TM))$ associated with the Chern connection is related to the 
$\omega^{\;i}_j$ via $\nabla_X\partial/\partial{}x^j=\omega^{\;i}_j(X)\partial/\partial{}x^i$, for all $X\in\Vect(\TMM)$.
In terms of the special $\rg$-orthonormal frames, we can write $\nabla_X{}e_b=\omega^{\;a}_b(X)e_b$, for all $X\in\Vect(\TMM)$ where
$\omega^{\;a}_b=\omega^a_i(de^i_b+\omega^{\;i}_j e^j_b)$
denote the frame components of the Chern connection.
The following theorem summarizes the defining properties of the Chern con\-nection. 
\begin{thm}\cite{BCS}
There exists a unique linear connection, $(\omega^a_b)$, on $\pi^*(TM)$, named the Chern connection, which is torsionfree and almost $\rg$-compatible, namely
\begin{equation}
\Omega^a=d\omega^a-\omega^b\wedge\omega^{\;a}_b=0
\label{Omega=0}
\end{equation}
and
\begin{equation}
\omega_{ab}+\omega_{ba}=-2A_{abc}\,\omega^{\barc},
\label{omegaabSym}
\end{equation}
where $a,b,c=1,\ldots,n$.
The corresponding connection coefficients are of the form
\begin{equation}
\Gamma^i_{\;jk}
=
\half\rg^{il}\left(\frac{\delta\rg_{kl}}{\delta{x^j}}+\frac{\delta\rg_{jl}}{\delta{x^k}}-\frac{\delta\rg_{jk}}{\delta{x^l}}\right),
\label{GammaChern}
\end{equation}
for all $i,j,k=1,\ldots,n$.
\end{thm}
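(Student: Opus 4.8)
The statement is the Finslerian analogue of the fundamental theorem of Riemannian geometry, so the plan is to run the classical Koszul argument, the only new feature being the careful bookkeeping of the vertical ($\delta{}y$) directions and of the Cartan term. I would work in a fixed coordinate chart and reduce everything to the determination of the coordinate connection forms $\omega^{\;i}_j$, since the frame forms are recovered from $\omega^{\;a}_b=\omega^a_i(de^i_b+\omega^{\;i}_je^j_b)$. The two intrinsic requirements are then transcribed as follows. Because the torsion $2$-form is tensorial, $\Omega^a=\omega^a_i\,\Omega^i$, condition (\ref{Omega=0}) is equivalent to its natural-frame version $\omega^{\;i}_j\wedge dx^j=0$. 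Likewise, (\ref{omegaabSym}) is simply the frame transcription of condition (ii): contracting (ii) with $e^i_ae^j_b$, using $\rg(e_a,e_b)=\delta_{ab}$, $A_{abc}=FC_{abc}$ from (\ref{A}), and $\delta{}y^i=F\,e^i_c\,\omega^{\barc}$, one passes between the two. Thus it suffices to solve conditions (i)--(ii) for $\omega^{\;i}_j$.

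First I would exploit torsion-freeness. Writing a general linear connection on $\pi^*(TM)$ as $\omega^{\;i}_j=\Gamma^i_{\;jk}\,dx^k+P^i_{\;jk}\,\delta{}y^k$, the condition $\omega^{\;i}_j\wedge dx^j=0$ decomposes along the linearly independent $2$-forms $dx^j\wedge dx^k$ and $\delta{}y^k\wedge dx^j$ (independent thanks to the splitting (\ref{Splitting})) into $\Gamma^i_{\;jk}=\Gamma^i_{\;kj}$ and $P^i_{\;jk}=0$. Hence torsion-freeness forces the connection to be purely horizontal, $\omega^{\;i}_j=\Gamma^i_{\;jk}\,dx^k$, with symmetric coefficients; horizontality is a consequence of (\ref{Omega=0}) rather than an extra axiom, and this is one of the points deserving care.

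Next I would insert the horizontal form into (ii). Since $\partial\rg_{ij}/\partial{}y^k=2C_{ijk}$ by (\ref{gij}) and the definition of the Cartan tensor, and $dy^k=\delta{}y^k-N^k_{\;l}dx^l$ by (\ref{deltay}), one has $d\rg_{ij}=(\delta\rg_{ij}/\delta{}x^l)\,dx^l+2C_{ijk}\,\delta{}y^k$. The vertical Cartan pieces on the two sides of (ii) then cancel identically, and (ii) collapses to the purely horizontal metric-compatibility relation $\delta\rg_{ij}/\delta{}x^k=\Gamma^l_{\;ik}\rg_{lj}+\Gamma^l_{\;jk}\rg_{il}$. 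This is formally the Riemannian problem with $\partial/\partial{}x$ replaced by the horizontal derivative $\delta/\delta{}x$ of (\ref{deltaxi}): cyclically permuting $i,j,k$, adding two relations and subtracting the third, and using $\Gamma^i_{\;jk}=\Gamma^i_{\;kj}$, I solve uniquely for the coefficients and obtain exactly (\ref{GammaChern}). Uniqueness is then immediate, while existence follows by reading the computation backwards: the $\Gamma$ defined by (\ref{GammaChern}) is manifestly symmetric in $j,k$, so the associated $\omega^{\;i}_j=\Gamma^i_{\;jk}dx^k$ satisfies (\ref{Omega=0}); and since it obeys horizontal metric compatibility by construction, the covariant derivative of $\rg$ reduces to its vertical part $2C_{ijk}\delta{}y^k$, which is precisely (ii), i.e. (\ref{omegaabSym}).

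The step I expect to demand the most care is the self-consistency with the non-linear connection. The right-hand side of (\ref{GammaChern}) is built from the horizontal derivative $\delta/\delta{}x^i$, which already involves the spray coefficients $N^i_{\;j}$ of (\ref{Nij}); one must verify that contracting the resulting $\Gamma^i_{\;jk}$ with $y^k$ returns those same coefficients, that is, the identity (\ref{NijChern}), $N^i_{\;j}=\Gamma^i_{\;jk}\,y^k$. This is the one place where the degree-one homogeneity of $F$ is genuinely used and where the argument departs from the Riemannian template; I would establish it separately from the Koszul manipulation, using $\rg_{ij}\,y^j=F\,F_{y^i}$ together with $A_{ijk}\,\dirsupp^k=0$ from (\ref{Al=0}). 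Granting this, the connection is well defined, compatible with the splitting (\ref{Splitting}), and uniquely characterized by (\ref{Omega=0})--(\ref{omegaabSym}).
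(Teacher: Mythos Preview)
The paper does not prove this theorem at all: it is quoted from~\cite{BCS} and stated without proof, the surrounding text only recording the defining properties~(i)--(ii) and the formula~(\ref{GammaChern}) before moving on to the curvature. So there is no ``paper's own proof'' to compare against.

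That said, your argument is correct and is essentially the standard derivation in~\cite{BCS}. A few remarks. Your observation that torsion-freeness alone forces the connection $1$-forms to be purely horizontal (the vanishing of the $P^i_{\;jk}$ coefficients) is the key structural point, and you handle it cleanly via the independence of the $dx^j\wedge dx^k$ and $\delta y^k\wedge dx^j$ blocks guaranteed by the splitting~(\ref{Splitting}). The reduction of~(ii) to horizontal metric compatibility, followed by the Koszul permutation, is exactly the textbook route. Your final paragraph on the consistency relation $N^i_{\;j}=\Gamma^i_{\;jk}y^k$ is well taken, but note that for the theorem as stated there is no circularity to resolve: the non-linear connection~(\ref{Nij}) is defined beforehand from the spray coefficients~$G^i$, so the right-hand side of~(\ref{GammaChern}) is already well defined; the identity~(\ref{NijChern}) is then a separate compatibility fact rather than something the existence proof hinges on.
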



The curvature, $(\Omega^a_b)$, of the linear connection, $(\omega^{\;a}_b)$, is defined by the structure equations
$\Omega^{\;a}_b=d\omega^{\;a}_b-\omega^{\;c}_b\wedge\omega^{\;a}_c$.
It retains the form
\begin{equation}
\Omega^{\;a}_b=\half{}R^{\;a}_{b\;cd}\,\omega^b\wedge\omega^d+P^{\;a}_{b\;cd}\,\omega^b\wedge\omega^\bard,
\label{Omega}
\end{equation}
where 
$R^{\;i}_{j\;kl}=e^i_a\,\omega^b_j\,\omega^c_k\,\omega^d_l\,R^{\;a}_{b\;cd}$,
reads
\begin{equation}
R^{\;i}_{j\;kl}
=
\frac{\delta\Gamma^i_{\;jl}}{\delta{x^k}}
-
\frac{\delta\Gamma^i_{\;jk}}{\delta{x^l}}
+\Gamma^i_{\;mk}\Gamma^m_{\;jl}
-
\Gamma^i_{\;ml}\Gamma^m_{\;jk},
\label{R}
\end{equation}
and, accordingly,
\begin{equation}
P^{\;i}_{j\;kl}
=
-F\frac{\partial\Gamma^i_{jk}}{\partial{y^l}}.
\label{P}
\end{equation}

The $\mathrm{hv}$-curvature, $P$, enjoys the fundamental property
\begin{equation}
P^{\;i}_{j\;kl}\,\dirsupp^l=0.
\label{Pl=0}
\end{equation}

Using Cartan's notation \cite{Car} (see also \cite{Run,BCS}), we write the covariant derivative of, e.g., a section $X$ of $\pi^*(TM)$ as
$(\nabla{X})^i=dX^i+\omega^{\;i}_jX^j=X^i_{\vert{}j}\,dx^j+X^i_{\Vert{}j}F^{-1}\delta{}y^j$,
where $i,j=1,\ldots,n$. In particular, it can be easily deduced from (\ref{NijChern}) that the covariant derivative of the unit vector~$\dirsupp$ is given by
\begin{equation}
(\nabla\dirsupp)^i=-\dirsupp^id\log{F}+\frac{\delta{}y^i}{F},
\label{nablaell}
\end{equation}
so that (\ref{deltaFdeltax=0}) leads to
\begin{equation}
\dirsupp^i_{\vert{}j}=0,
\qquad
\dirsupp^i_{\Vert{}j}=\delta^{i}_j-\dirsupp^i\dirsupp^{}_j,
\label{nablaellBis}
\end{equation}
for all $i,j=1,\ldots,n$.

Let us deduce from (\ref{nablaell}) a classical formula highlighting a special property of the Chern connection.
\begin{pro}
There holds
\begin{equation}
\omega^{\;a}_n=h^{\;a}_b\,\omega^\barb,
\label{omegana}
\end{equation}
where the $h_{ab}=\delta_{ab}-\delta^n_a\delta^n_b$, with $a,b=1,\ldots,n$, are the frame-components of the ``angular metric''.
\end{pro}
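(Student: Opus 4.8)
The plan is to read off $\omega^{\;a}_n$ directly from the already-computed covariant derivative of the supporting direction $\dirsupp=e_n$, and then to recognize the resulting expression as $h^{\;a}_b\,\omega^\barb$. First I would start from the defining relation $\nabla_X e_n=\omega^{\;a}_n(X)\,e_a$, which in the transplanted coordinate basis reads $(\nabla e_n)^i=\omega^{\;a}_n\,e_a^i$. Since $e_n=\dirsupp$ with $e_n^i=\dirsupp^i$ by (\ref{eni}), the left-hand side is precisely the covariant derivative supplied by (\ref{nablaell}), namely $(\nabla\dirsupp)^i=-\dirsupp^i\,d\log F+\delta y^i/F$. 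This gives the single identity $\omega^{\;a}_n\,e_a^i=-\dirsupp^i\,d\log F+\delta y^i/F$, from which everything follows.

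Next I would extract the frame components by contracting with the dual coframe $\omega^b_i=(e_a^i)^{-1}$. Multiplying the above identity by $\omega^b_i$ and summing over $i$ collapses the left-hand side via $\omega^b_i\,e_a^i=\delta^b_a$ to $\omega^{\;b}_n$. On the right-hand side two identifications do the work: $\omega^b_i\,\dirsupp^i=\omega^b_i\,e_n^i=\delta^b_n$, and $\omega^b_i\,(\delta y^i/F)=\omega^\barb$ by the definition of the vertical coframe in (\ref{dualOrthonormalBases}). Hence $\omega^{\;b}_n=-\delta^b_n\,d\log F+\omega^\barb$.

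Finally I would invoke (\ref{omegabn}), that is $\omega^\barn=d\log F$, to rewrite this as $\omega^{\;b}_n=\omega^\barb-\delta^b_n\,\omega^\barn$. Because the orthonormal-frame metric is $\delta_{ab}$, raising an index on $h_{ab}=\delta_{ab}-\delta^n_a\delta^n_b$ is trivial and yields $h^{\;b}_c=\delta^b_c-\delta^b_n\delta^n_c$, so that $h^{\;b}_c\,\omega^\barc=\omega^\barb-\delta^b_n\,\omega^\barn$. This matches the expression just obtained, which proves the claim after relabeling indices.

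The computation is essentially routine once (\ref{nablaell}) is in hand, so there is no serious obstacle; the only point demanding care is the index bookkeeping in the two identifications $\omega^b_i\,\dirsupp^i=\delta^b_n$ and $\omega^b_i\,(\delta y^i/F)=\omega^\barb$, together with the observation that the isolated $-\delta^b_n\,\omega^\barn$ term is exactly what turns the full vertical coframe $\omega^\barb$ into its angular projection by $h^{\;b}_c$, which annihilates the radial direction $e_n$.
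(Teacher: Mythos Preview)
Your argument is correct and follows precisely the route the paper indicates: the proposition is stated immediately after (\ref{nablaell}) with the remark ``Let us deduce from (\ref{nablaell}) a classical formula\ldots'', and the paper gives no further proof. Your contraction of $(\nabla\dirsupp)^i=\omega^{\;a}_n e_a^i$ with $\omega^b_i$, followed by the identification $d\log F=\omega^\barn$ from (\ref{omegabn}), is exactly the intended deduction.
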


We end this section by a useful lemma (see Section 3.4 B in \cite{BCS} for a proof).
\begin{lem}\label{BianchiChern}
The first Bianchi identities 
imply
$A_{ij[k\Vert{}l]}=A_{ij[k}\dirsupp_{l]}$,
where the square brackets denote skew-symmetrization.
If we define the covariant derivative of the Cartan tensor (\ref{A}) in the direction, $\dirsupp$, of of the supporting element by
\begin{equation}
\dot{A}_{ijk}=A_{ijk|l}\,\dirsupp^l,
\label{dotA}
\end{equation}
then, the same Bianchi identities lead to
$P_{ijk}=\dirsupp^l{}P_{lijk}=-\dot{A}_{ijk}$.
\end{lem}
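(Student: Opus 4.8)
The plan is to read both identities off the first Bianchi identities of the torsion-free, almost-compatible Chern connection. Exterior-differentiating the first structure equation $\Omega^a=d\omega^a-\omega^b\wedge\omega^{\;a}_b=0$ (equation~(\ref{Omega=0})) and cancelling the pure connection terms produces the compact identity $\omega^b\wedge\Omega^{\;a}_b=0$; inserting the decomposition~(\ref{Omega}) and sorting the resulting $3$-forms by horizontal/vertical type yields the first symmetries of $R$ and $P$. The quantitative link to the Cartan tensor, however, comes from differentiating the almost-compatibility relation~(\ref{omegaabSym}), $\omega_{ab}+\omega_{ba}=-2A_{abc}\,\omega^\barc$: with the structure equations this expresses the symmetric part $\Omega_{ab}+\Omega_{ba}$ of the curvature through the covariant differential of $A$, i.e. through $A_{abc|d}$ and $A_{abc\Vert d}$. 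Sorting that equation by $\mathrm{hv}$- and $\mathrm{vv}$-type will produce, respectively, the second and the first assertion of the lemma.

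For the first assertion the computation is in fact elementary and can be done directly. Since the Chern form $\omega^{\;i}_j=\Gamma^i_{\;jk}dx^k$ is purely horizontal, the vertical covariant derivative reduces to a genuine fibre derivative, $A_{ijk\Vert l}=F\,\partial A_{ijk}/\partial y^l$. Writing $A_{ijk}=F\,C_{ijk}$ with $C_{ijk}=\left(\frac{1}{4}F^2\right)_{y^iy^jy^k}$ and recalling $\dirsupp_l=F_{y^l}$, one gets $A_{ijk\Vert l}=\dirsupp_l A_{ijk}+F^2C_{ijkl}$, where $C_{ijkl}=\left(\frac{1}{4}F^2\right)_{y^iy^jy^ky^l}$ is totally symmetric. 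Skew-symmetrising in $k,l$ annihilates the symmetric term $F^2C_{ijkl}$ and leaves exactly $A_{ij[k\Vert l]}=A_{ij[k}\dirsupp_{l]}$; this is precisely the $\mathrm{vv}$-component of the first Bianchi identities.

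For the second assertion I would exploit that $P$ measures the failure of the horizontal and vertical covariant derivatives of the fundamental tensor to commute. Condition~(ii) of the Chern connection gives $g_{ij|k}=0$ and $g_{ij\Vert k}=2A_{ijk}$, so the mixed Ricci (commutation) identity turns $g_{ij\Vert k|l}=2A_{ijk|l}$ into an expression involving the $\mathrm{hv}$-curvature acting on the indices $i,j$. Contracting the free horizontal index with the supporting element $\dirsupp^l$, discarding by~(\ref{Pl=0}) the terms in which $\dirsupp$ meets the last slot of $P$, and using $A_{ijk}\dirsupp^k=0$ (equation~(\ref{Al=0})) together with the first assertion to absorb the residual $A_{\cdots\Vert\cdots}$ contributions, I expect the right-hand side to collapse to the single term $\dirsupp^l P_{lijk}$. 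This yields $\dirsupp^l P_{lijk}=-\dot A_{ijk}$ with $\dot A_{ijk}=A_{ijk|l}\dirsupp^l$ as in~(\ref{dotA}).

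The main obstacle is bookkeeping rather than conceptual. Extracting the $\mathrm{hv}$-component from the symmetrised, differentiated relation~(\ref{omegaabSym}) requires separating the genuine $P$-curvature from the Cartan-tensor correction terms it generates, and one must reconcile the two distinct $\dirsupp$-contractions at play: the last-slot contraction that vanishes by~(\ref{Pl=0}) versus the first-slot contraction appearing in the statement. Passing between them relies on the symmetries of $P$ read off in the first step, and it is exactly at this point — in concert with the first assertion, which disposes of the vertical derivatives of $A$ — that the sign in $-\dot A_{ijk}$ is fixed.
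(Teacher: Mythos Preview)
The paper does not supply its own proof of this lemma; it simply refers the reader to Section~3.4\,B of \cite{BCS}. So there is no in-text argument to compare against, and your proposal stands or falls on its own.

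Your treatment of the first assertion is correct and pleasantly direct: since the Chern connection form is purely horizontal, the vertical covariant derivative of $A_{ijk}$ is the bare fibre derivative $F\,\partial A_{ijk}/\partial y^l$, and the identity follows at once from the total symmetry of $(\tfrac14 F^2)_{y^iy^jy^ky^l}$. This is exactly how one would argue.

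For the second assertion your plan is sound in spirit but not yet a proof. The mixed Ricci identity applied to $g_{ij}$ does produce the symmetric combination $P_{ijkl}+P_{jikl}$, but in Finsler geometry that commutator carries an extra $A\!\cdot\!\dot A$ term beyond the naive ``curvature acting on each index''; this is precisely the correction that makes $P_{ijkl}+P_{jikl}=-2A_{ijl|k}+2A_{ijm}\dot A^{m}_{\ kl}$ rather than what a Riemannian reflex would suggest. That extra term does vanish upon the $\dirsupp$-contraction you propose, so your strategy survives, but you should exhibit it rather than ``expect the right-hand side to collapse''. A shorter route---and the one the paper itself relies on later, in the proof of Lemma~\ref{hOmegaSLemma}---is to take the constitutive formula (Equation~(3.4.11) of \cite{BCS})
\[
P_{abcd}=A_{acd|b}-A_{bad|c}-A_{cbd|a}+A^{}_{bam}\dot A^{m}_{\ cd}-A^{}_{acm}\dot A^{m}_{\ bd}+A^{}_{bcm}\dot A^{m}_{\ ad},
\]
contract with $\dirsupp^a$, and use $A_{abc}\dirsupp^c=0$ together with $\dirsupp^a_{\ |b}=0$ from~(\ref{nablaellBis}): every term dies except $-\dirsupp^a A_{cbd|a}=-\dot A_{bcd}$, which is the claimed relation.
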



\subsubsection{The Cartan connection}

The Cartan connection is another prominent Finsler linear connection on $\pi^*(TM)$ which is related to the Chern connection in a simple way.
\begin{defi} \cite{BCS1,BCS}
Let $(M,F)$ be a Finsler structure with Cartan tensor~$(A_{abc})$, and let $(\omega_{ab})$ denote its canonical Chern connection. The frame components of the Cartan connection of $(M,F)$ are defined by
\begin{equation}
\homega_{ab}=\omega_{ab}+A_{abc}\,\omega^{\barc},
\label{omegaCartan}
\end{equation}
where $\barc=c+n$, with $a,b,c=1,\ldots,n$.
\end{defi}

The fundamental virtue of these connection $1$-forms is the skewsymmetry
\begin{equation}
\homega_{ab}+\homega_{ba}
=0,
\label{omegaCartanSkew}
\end{equation}
that guarantees that the fundamental tensor is parallel, $\hnabla\rg=0$, where $\hnabla$ stands for the covariant derivative associated with the Cartan connection. 
The Cartan connection is not symmetric; its torsion tensor $\hOmega^{a}=d\omega^a-\omega^b\wedge\homega_b^{a}$, is nonzero. Indeed, $\hOmega^{a}=\Omega^{a}-\omega^b\wedge{}A_{abc}\,\omega^{\barc}$, and, since $\Omega^a=0$, it retains the form
\begin{equation}
\hOmega^{a}=-A^a_{\;bc}\,\omega^b\wedge\omega^{\barc}.
\label{CartanTorsion}
\end{equation}
One easily proves the following result.
\begin{thm}
There exists a unique linear connection, $(\homega^a_b)$, on $\pi^*(TM)$ whose torsion, $(\hOmega^a)$, is given by (\ref{CartanTorsion}), and which is $\rg$-compatible, $\hnabla\rg=0$, as expressed by~(\ref{omegaCartanSkew}). This connection is the Cartan connection (\ref{omegaCartan}).
\end{thm}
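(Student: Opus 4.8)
The plan is to treat existence and uniqueness separately; existence is essentially contained in the computations preceding the statement, so the real work lies in uniqueness.

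For existence I would simply take the explicit Cartan connection (\ref{omegaCartan}) as the candidate. Since it differs from the Chern connection $(\omega_{ab})$ by the tensorial $1$-form $A_{abc}\,\omega^{\barc}$, it is again a bona fide linear connection on $\pi^*(TM)$. Its torsion has already been found to equal (\ref{CartanTorsion}), using $\Omega^a=0$ from (\ref{Omega=0}); and its $\rg$-compatibility in the skew form (\ref{omegaCartanSkew}) follows immediately from the almost-compatibility (\ref{omegaabSym}) of the Chern connection together with the total symmetry of $(A_{abc})$. Thus the Cartan connection satisfies both prescribed conditions.

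For uniqueness, suppose $(\homega_{ab})$ and $(\homega'_{ab})$ are two linear connections on $\pi^*(TM)$, each with torsion (\ref{CartanTorsion}) and each $\rg$-compatible, so that both connection $1$-forms are skewsymmetric, $\homega_{ab}+\homega_{ba}=0$ and $\homega'_{ab}+\homega'_{ba}=0$. I would form their difference $\theta_{ab}=\homega_{ab}-\homega'_{ab}$, which is a $\pi^*(TM)$-tensorial $1$-form since the $d\omega^a$ and quadratic terms cancel. Equality of the torsions collapses to the single tensorial identity $\omega^b\wedge\theta_{ab}=0$ (indices raised and lowered with $\delta_{ab}$), while subtracting the two skewsymmetry relations gives $\theta_{ab}+\theta_{ba}=0$. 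Decomposing $\theta_{ab}=\theta_{abc}\,\omega^c+\phi_{abc}\,\omega^{\barc}$ in the $\mathrm{hv}$-coframe and using the linear independence of the families $\{\omega^b\wedge\omega^c\}$ and $\{\omega^b\wedge\omega^{\barc}\}$, the torsion identity forces both $\phi_{abc}=0$ (vanishing of the vertical part) and the symmetry $\theta_{abc}=\theta_{acb}$ in the last two indices.

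At this stage the argument is identical to the one underlying the fundamental theorem of Riemannian geometry: the horizontal coefficients $\theta_{abc}$ are antisymmetric in their first two indices and symmetric in their last two, and the cyclic chain $\theta_{abc}=-\theta_{bac}=-\theta_{bca}=\theta_{cba}=\theta_{cab}=-\theta_{acb}=-\theta_{abc}$ forces $\theta_{abc}=0$. Hence $\theta_{ab}=0$ and the two connections coincide. The main obstacle, such as it is, is purely bookkeeping: one must check carefully that the prescribed-torsion condition really does reduce to $\omega^b\wedge\theta_{ab}=0$ with the $d\omega^a$ terms dropping out, and that it is this condition --- not metric compatibility --- that annihilates the vertical component $\phi_{abc}$; once this is verified, the classical three-index argument finishes the proof.
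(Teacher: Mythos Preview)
Your argument is correct. The paper does not actually supply a proof of this theorem: it merely states ``One easily proves the following result'' and then moves on, so there is nothing substantive to compare against. Your approach --- verifying that the explicit form (\ref{omegaCartan}) has the required torsion and skewsymmetry (both of which the paper has already computed just before the theorem), and then running the standard three-index cycling argument on the difference of two candidate connections --- is exactly the natural one, and is the Finslerian analogue of the fundamental-lemma argument for the Levi-Civita connection. Your closing remark, that it is the torsion condition (not $\rg$-compatibility) that kills the vertical part $\phi_{abc}$ because the $\omega^b\wedge\omega^{\barc}$ are independent, is the one point where the Finsler setting differs from the Riemannian one, and you have handled it correctly.
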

Other characterizations of the Cartan connection 
can be found in the literature, e.g, in \cite{AP,AIM,BF}. 
\begin{pro}
The torsion of the Cartan connection is such that
\begin{equation}
\hOmega^n=0
\label{hOmegan=0}.
\end{equation}
\end{pro}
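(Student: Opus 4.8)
The plan is to read off $\hOmega^n$ directly from the torsion formula (\ref{CartanTorsion}) already established for the Cartan connection, and then to show that every relevant frame component of the Cartan tensor carrying the index $n$ vanishes.

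First I would set $a=n$ in (\ref{CartanTorsion}), obtaining $\hOmega^n = -A^n_{\;bc}\,\omega^b\wedge\omega^{\barc}$. Since the computation is carried out in the $\rg$-orthonormal frame $(e_a)$, indices are raised with $\delta_{ab}$, so that $A^n_{\;bc} = A_{nbc}$, and it suffices to prove $A_{nbc} = 0$ for all $b,c = 1,\ldots,n$.

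Next I would transport the two defining features of the Cartan tensor from the coordinate components $A_{ijk}$ to the frame components $A_{abc} = A_{ijk}\,e_a^i e_b^j e_c^k$. Total symmetry $A_{ijk} = A_{(ijk)}$ passes immediately to $A_{abc} = A_{(abc)}$. The null-contraction property (\ref{Al=0}), namely $A_{ijk}\dirsupp^k = 0$, combined with the normalization $e_n = \dirsupp$, i.e. $e_n^i = \dirsupp^i$ from (\ref{eni}), then yields $A_{abn} = A_{ijk}\,e_a^i e_b^j \dirsupp^k = 0$. Using total symmetry to shift the index $n$ into the last slot, I conclude $A_{nbc} = A_{bcn} = 0$, whence $\hOmega^n = 0$.

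I do not expect a genuine obstacle here: the statement is an immediate consequence of (\ref{CartanTorsion}) and (\ref{Al=0}). The only point meriting care is the faithful transport of both the symmetry and the null property of the Cartan tensor from the coordinate basis to the orthonormal frame, which hinges precisely on the identification $e_n = \dirsupp$ fixed in (\ref{en})--(\ref{eni}).
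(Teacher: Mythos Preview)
Your argument is correct. The paper actually states this proposition without proof, relying implicitly on exactly the computation you spell out: plug $a=n$ into the torsion formula (\ref{CartanTorsion}) and kill the resulting term via the total symmetry of the Cartan tensor together with $A_{ijk}\dirsupp^k=0$ and $e_n=\dirsupp$. Your write-up supplies precisely the missing routine verification, and there is nothing to add.
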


From now on, and whenever possible, we will use frame indices, $a,b,c,\ldots$, rather than local coordinate indices, $i,j,k,\ldots$.

\goodbreak


\begin{pro}\cite{BCS1}
The curvature $\hOmega^{\;a}_b=d\homega^{\;a}_b-\homega^{\;c}_b\wedge\homega^{\;a}_c$ of the Cartan connection is given by
\begin{equation}
\hOmega_{ab}=\Omega_{ab}+A_{abc}\,\Omega^{\;c}_n+A_{abd|c}\,\omega^c\wedge\omega^\bard+A^e_{\;ad}A^{}_{bce}\,\omega^\barc\wedge\omega^\bard
\label{hOmega}
\end{equation}
where $(\Omega^{\;a}_b)$ denotes the Chern curvature $2$-form (\ref{Omega}) with (\ref{R}, \ref{P}), and~$(A_{abc})$ the Cartan tensor~(\ref{A}). One has the decomposition
\begin{equation}
\hOmega^{\;a}_b=\half\hR^{\;a}_{b\;cd}\,\omega^c\wedge\omega^d+\hP^{\;a}_{b\;cd}\,\omega^c\wedge\omega^\bard+\half\hQ^{\;a}_{b\;cd}\,\omega^\barc\wedge\omega^\bard,
\label{hOmegaGeneral}
\end{equation}
with
\begin{eqnarray}
\label{hR}
\hR^{\;a}_{b\;cd}&=&R^{\;a}_{b\;cd}+A^{a}_{\;be}\,R^e_{\;cd},\\
\label{hP}
\hP^{\;a}_{b\;cd}&=&P^{\;a}_{b\;cd}+A^{a}_{\;bd|c}-A^a_{\;be}\dot{A}^e_{\;cd},\\
\label{hQ}
\hQ^{\;a}_{b\;cd}&=&2A^a_{\;e[c}A^e_{\;d]b},
\end{eqnarray}
where we use the notation $R^e_{\;cd}=R^{\;e}_{n\;cd}$, and $\dot{A}_{abc}=A_{abc|n}$ (see~(\ref{dotA})).
\end{pro}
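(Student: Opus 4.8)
The plan is to substitute the defining relation (\ref{omegaCartan}), $\homega_{ab}=\omega_{ab}+A_{abc}\,\omega^{\barc}$, directly into the structure equation $\hOmega_{ab}=d\homega_{ab}-\homega_{cb}\wedge\homega_{ac}$ and expand. The terms $d\omega_{ab}-\omega_{cb}\wedge\omega_{ac}$ reassemble into the Chern curvature $\Omega_{ab}$, leaving a remainder that is linear and quadratic in the Cartan tensor. I would sort this remainder into the ``$d$'' of the correction, $dA_{abc}\wedge\omega^{\barc}+A_{abc}\,d\omega^{\barc}$, the two cross terms $-\omega_{cb}\wedge A_{ace}\,\omega^{\bar e}$ and $-A_{cbd}\,\omega^{\bard}\wedge\omega_{ac}$ (linear in $A$), and the genuinely quadratic term $-A_{cbd}A_{ace}\,\omega^{\bard}\wedge\omega^{\bar e}$.

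For the linear terms I would gather $dA_{abc}\wedge\omega^{\barc}$ with the two cross terms and recognize the combination as the Chern covariant differential $\nabla A_{abc}$, up to the non-skew part of the connection; here the almost-metric identity (\ref{omegaabSym}), $\omega_{ab}+\omega_{ba}=-2A_{abc}\,\omega^{\barc}$, is used to trade the symmetric part of $\omega$ for an extra $A^2$ contribution. Writing $\nabla A_{abc}=A_{abc|d}\,\omega^d+A_{abc\Vert d}\,\omega^{\bard}$ then peels off the horizontal piece, which after relabelling is exactly $A_{abd|c}\,\omega^c\wedge\omega^{\bard}$. The term $A_{abc}\,d\omega^{\barc}$ I would convert using the Proposition (\ref{omegana}), $\omega^{\;a}_n=h^{\;a}_b\,\omega^{\barb}$, which gives $\omega^{\barc}=\omega^{\;c}_n$ on the range where $A_{abc}\neq0$ (since $A_{abn}=0$ by (\ref{Al=0})); differentiating and inserting the Chern structure equation $\Omega^{\;c}_n=d\omega^{\;c}_n-\omega^{\;e}_n\wedge\omega^{\;c}_e$ produces precisely the curvature term $A_{abc}\,\Omega^{\;c}_n$, together with a residual vertical-vertical piece proportional to $\omega^{\barc}\wedge\omega^{\barn}$.

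The delicate step, and the one I expect to be the main obstacle, is showing that all the vertical-vertical contributions --- the $A_{abc\Vert d}\,\omega^{\bard}\wedge\omega^{\barc}$ piece, the two $A^2$ terms, and the residual $\omega^{\barc}\wedge\omega^{\barn}$ term --- collapse to the single expression $A_{ead}A_{bce}\,\omega^{\barc}\wedge\omega^{\bard}$. The key inputs are the total symmetry of $A_{abc}$ and the trace identities obtained by differentiating (\ref{Al=0}): vertically this yields $A_{abn\Vert d}=-A_{abd}$, which makes the spurious $\omega^{\barn}$-terms cancel, while the fully transverse part of $A_{\Vert}$ is symmetric in the two relevant slots and hence annihilated by the wedge. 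Antisymmetrizing the surviving $A^2$ terms in $(c,d)$ then produces $A_{ead}A_{bce}\,\omega^{\barc}\wedge\omega^{\bard}$, completing the proof of the first displayed formula.

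The decomposition into $\hR$, $\hP$, $\hQ$ is then immediate. I would insert the decomposition (\ref{Omega})--(\ref{P}) of $\Omega_{ab}$ and of $\Omega^{\;c}_n$ into the first formula and read off the coefficients of $\omega^c\wedge\omega^d$, $\omega^c\wedge\omega^{\bard}$, and $\omega^{\barc}\wedge\omega^{\bard}$. Using $R^{\;e}_{n\;cd}=R^e_{\;cd}$ gives $\hR^{\;a}_{b\;cd}=R^{\;a}_{b\;cd}+A^a_{\;be}R^e_{\;cd}$; invoking Lemma \ref{BianchiChern} in the form $P_{ecd}=-\dot A_{ecd}$ turns the mixed term into $\hP^{\;a}_{b\;cd}=P^{\;a}_{b\;cd}+A^a_{\;bd|c}-A^a_{\;be}\dot A^e_{\;cd}$; and antisymmetrizing the quadratic vertical term yields $\hQ^{\;a}_{b\;cd}=2A^a_{\;e[c}A^e_{\;d]b}$.
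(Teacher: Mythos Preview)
The paper does not supply a proof of this Proposition: it is quoted from \cite{BCS1} and stated without argument, so there is no ``paper's own proof'' to compare against. What I can do is assess your sketch on its own terms.

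Your strategy---substitute $\homega_{ab}=\omega_{ab}+A_{abc}\,\omega^{\barc}$ into the structure equation, peel off~$\Omega_{ab}$, and organize the remainder using (\ref{omegana}) together with $A_{abn}=0$ to recognize $A_{abc}\,\Omega^{\;c}_n$---is exactly the standard one and is correct in outline. The observation that $A_{abc}\,\omega^{\barc}=A_{abc}\,\omega^{\;c}_n$ (since the $c=n$ term drops by (\ref{Al=0})) is the right trick for producing the $\Omega^{\;c}_n$ term cleanly. Likewise, once the first displayed formula is established, reading off $\hR,\hP,\hQ$ from (\ref{Omega}) and Lemma~\ref{BianchiChern} is routine, just as you describe.

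The one place where your account is a bit hand-wavy is the vertical--vertical bookkeeping. You invoke ``$A_{abn\Vert d}=-A_{abd}$'', but be careful: in frame components one is differentiating the contraction $A_{abi}\,e_n^i$, and via (\ref{nablaellBis}) one gets $A_{abc}\,\dirsupp^c_{\Vert d}=A_{abc}(\delta^c_d-\dirsupp^c\dirsupp_d)=A_{abd}$, so the identity you want is really a consequence of (\ref{nablaellBis}) combined with (\ref{Al=0}). Also, the claim that the ``fully transverse part of $A_{\Vert}$ is symmetric in the two relevant slots and hence annihilated by the wedge'' needs the Bianchi-type identity $A_{ij[k\Vert l]}=A_{ij[k}\dirsupp_{l]}$ from Lemma~\ref{BianchiChern}; you should cite it explicitly rather than asserting symmetry. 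With those two clarifications the cancellation of the spurious vertical terms and the survival of precisely $A^e_{\;ad}A_{bce}\,\omega^{\barc}\wedge\omega^{\bard}$ goes through.
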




\section{Geometrical optics in Finsler spaces}
\label{SectionGeometricalOpticsFinsler}

\subsection{Finsler geodesics}

Following Souriau's terminology \cite{Sou}, we call \textit{evolution space} the indicatrix-bundle 
\begin{equation}
SM=F^{-1}(1)
\label{SM}
\end{equation}
above $M$, as it actually hosts the dynamics given by a presymplectic structure; the latter will eventually be inherited from the Finsler metric on $\TMM$. 

Denote, again, by $\iota:SM\hookrightarrow\TMM$ the canonical embedding. The fundamental geometric object governing the geodesic spray on $SM$ is the $1$-form
\begin{equation}
\varpi=\iota^*\omega_H,
\label{varpi}
\end{equation}
i.e., the pull-back on $SM$ of Hilbert $1$-form $\omega_H$ (see~(\ref{Hilbert})). The direction of this $1$-form defines a contact structure on the $(2n-1)$-dimensional manifold $SM$, since $\varpi\wedge(d\varpi)^{n-1}\neq0$. See \cite{Fou,Che1,Che}. The following lemma is classical.
\begin{lem}
The exterior derivative of the Hilbert $1$-form is given by
\begin{equation}
d\omega_H=
\delta_{AB}\,\omega^\barA\wedge\omega^B
\label{domegaH}
\end{equation}
with $A,B=1,\ldots,n-1$.
\end{lem}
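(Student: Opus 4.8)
The plan is to work in the orthonormal $\mathrm{hv}$-frame and exploit the torsion-freeness of the Chern connection. Since $\omega_H=\omega^n$ by (\ref{Hilbert}), and the first structure equation (\ref{Omega=0}) reads $\Omega^a=d\omega^a-\omega^b\wedge\omega^{\;a}_b=0$, I would immediately write
\[
d\omega_H=d\omega^n=\omega^b\wedge\omega^{\;n}_b .
\]
The whole computation then reduces to identifying the connection $1$-forms $\omega^{\;n}_b$ carrying the distinguished index $n$, corresponding to $e_n=\dirsupp$.

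First I would record that every frame-component of the Cartan tensor bearing an index $n$ vanishes: from $A_{ijk}\dirsupp^k=0$ in (\ref{Al=0}), together with $e_n=\dirsupp$ and the total symmetry $A_{abc}=A_{(abc)}$, one gets $A_{abn}=0$, whence $A_{nbc}=0$ for all $b,c$. Feeding this into the almost-compatibility relation (\ref{omegaabSym}), $\omega_{ab}+\omega_{ba}=-2A_{abc}\,\omega^{\barc}$, with $a=n$ gives the skew-symmetry $\omega_{nb}=-\omega_{bn}$. Next I would invoke the Proposition (\ref{omegana}), $\omega^{\;a}_n=h^{\;a}_b\,\omega^{\barb}$, which (lowering indices with $\delta_{ab}$, legitimate since $\rg(e_a,e_b)=\delta_{ab}$ by (\ref{gab})) yields $\omega_{bn}=h_{bc}\,\omega^{\barc}=\omega^{\barb}-\delta^n_b\,\omega^{\barn}$. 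Combining the two facts,
\[
\omega^{\;n}_b=\omega_{nb}=-\omega_{bn}=-\omega^{\barb}+\delta^n_b\,\omega^{\barn}.
\]

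Substituting this into the structure-equation expression gives $d\omega^n=-\omega^b\wedge\omega^{\barb}+\omega^n\wedge\omega^{\barn}$, the sum running over $b=1,\ldots,n$. The key observation, and the only delicate point, is the bookkeeping of the distinguished index: the $b=n$ term of $\sum_b\omega^b\wedge\omega^{\barb}$ is precisely $\omega^n\wedge\omega^{\barn}$, which cancels the isolated term produced by $\delta^n_b$. What remains is
\[
d\omega^n=-\sum_{b=1}^{n-1}\omega^b\wedge\omega^{\barb}=\delta_{AB}\,\omega^{\barA}\wedge\omega^B,
\qquad A,B=1,\ldots,n-1,
\]
as claimed. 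The main (and essentially only) obstacle is this index bookkeeping: correctly passing from the Proposition's $\omega^{\;a}_n$ to the $\omega^{\;n}_b$ appearing in the structure equation, via the skew-symmetry that itself rests on $A_{nbc}=0$. Once this is in place, the cancellation of the $n$-components is automatic, and no restriction to $SM$ (where $\omega^{\barn}=d\log F=0$ by (\ref{omegabn})) is even needed.
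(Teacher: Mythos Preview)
Your argument is correct. The paper itself does not give a proof of this lemma (it simply declares it ``classical'' and passes directly to the following Remark), so there is nothing to compare against; your use of the torsion-free structure equation together with $\omega^{\;a}_n=h^{\;a}_b\,\omega^{\barb}$ and the vanishing $A_{nbc}=0$ is exactly the expected route, and your observation that the identity already holds on $\TMM$ without restricting to $SM$ is a nice bonus.
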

\begin{rmk}
{\rm
The exterior derivative of the Hilbert $1$-form is independent of the choice of a linear connection; it depends only on the non-linear connection~(\ref{Splitting}).
}
\end{rmk}
The main result regarding Finsler geodesics can be stated as follows. See also~\cite{Fou} for a full account on the geometry of second order differential equations.

\begin{thm}\label{ThmGeodesicSpray}
The geodesic spray of a Finsler structure $(M,F)$ is the vector field~$X$ of $SM$ uniquely defined by
\begin{equation}
\sigma(X)=0,
\qquad
\qquad
\varpi(X)=1,
\label{geodesicSpray}
\end{equation}
where $\sigma=d\varpi$.
\end{thm}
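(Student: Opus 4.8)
The statement bundles together two distinct claims: the \emph{existence and uniqueness} of a vector field $X$ satisfying (\ref{geodesicSpray}), and the \emph{identification} of that $X$ with the geodesic spray. My plan is to dispatch these in turn, the first by a formal contact-geometry argument and the second by exhibiting the spray explicitly.

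For existence and uniqueness I would invoke only that $\varpi$ is a contact form on the $(2n-1)$-dimensional manifold $SM$, i.e.\ $\varpi\wedge(d\varpi)^{n-1}\neq0$. Since $\dim SM=2n-1$ is odd, $\sigma=d\varpi$ is degenerate; as $\sigma^{n-1}\neq0$ while $\sigma^{n}$ vanishes for degree reasons, $\sigma$ has constant rank $2n-2$, so $\ker\sigma\subset T(SM)$ is a line field. On this line $\varpi$ is nowhere zero: if $X\neq0$ obeyed $i_X\sigma=0$ and $\varpi(X)=0$, then
\[
i_X\!\left(\varpi\wedge\sigma^{n-1}\right)=\varpi(X)\,\sigma^{n-1}-(n-1)\,\varpi\wedge(i_X\sigma)\wedge\sigma^{n-2}=0,
\]
contradicting the fact that $\varpi\wedge\sigma^{n-1}$ is a volume form. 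Hence there is a unique section $X$ of $\ker\sigma$ with $\varpi(X)=1$; these are precisely the two conditions (\ref{geodesicSpray}), and $X$ is the Reeb field of the contact form $\varpi$.

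It remains to check that this Reeb field is the geodesic spray. With the normalization (\ref{Nij}) the geodesics of $(M,F)$ satisfy $\ddot x^i+G^i=0$, so the spray is $S=y^i\partial/\partial x^i-G^i\partial/\partial y^i$; using the homogeneity identity $N^i_{\;j}y^j=G^i$ this is exactly the horizontal lift $S=y^i\,\delta/\delta x^i$ (see (\ref{deltaxi})). By horizontal constancy of the metric (\ref{deltaFdeltax=0}) one has $S(F)=y^i\,\delta F/\delta x^i=0$, so $S$ is tangent to $SM=F^{-1}(1)$; and on $SM$, where $F=1$ and $\dirsupp^i=y^i$, the definition (\ref{orthonormalBases}) together with $e_n=\dirsupp$ gives $S=\he_n$. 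I therefore take $X=\he_n$ and verify (\ref{geodesicSpray}) from the dual $\mathrm{hv}$-frame relations $\omega^a(\he_b)=\delta^a_b$, $\omega^{\bara}(\he_b)=0$. The normalization is immediate, $\varpi(X)=\omega^n(\he_n)=1$. For the kernel condition I use (\ref{domegaH}), $d\omega_H=\sum_{A=1}^{n-1}\omega^{\barA}\wedge\omega^A$: contracting $\he_n$ gives, for each $A\le n-1$,
\[
i_{\he_n}\!\left(\omega^{\barA}\wedge\omega^A\right)=\omega^{\barA}(\he_n)\,\omega^A-\omega^A(\he_n)\,\omega^{\barA}=0,
\]
since $\omega^{\barA}(\he_n)=0$ and $\omega^A(\he_n)=\delta^A_n=0$ in that range. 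Pulling back along $\iota$, which commutes with contraction by the $\iota$-related field $X$, yields $\sigma(X)=i_X\,d\varpi=0$. By the uniqueness just established, $X=\he_n$ is the Reeb field, completing the identification.

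The genuinely delicate part is this identification rather than the abstract Reeb argument, which is formal and robust. One must (i) match the paper's factor conventions so that the geodesic spray really coincides with the horizontal lift $y^i\,\delta/\delta x^i$, i.e.\ with $\he_n$ on $SM$, and (ii) recognize through the explicit expression (\ref{domegaH}) for $d\omega_H$ that this horizontal supporting direction annihilates $\sigma$. Everything else reduces to the orthonormality and duality bookkeeping of the $\mathrm{hv}$-frame.
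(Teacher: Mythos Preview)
Your proof is correct, and both you and the paper ultimately arrive at $X=\he_n=\dirsupp^i\,\delta/\delta x^i$ using Lemma~(\ref{domegaH}). The organization, however, is genuinely different. You split the problem into an abstract Reeb-field argument (existence/uniqueness from $\varpi\wedge\sigma^{n-1}\neq0$ alone) followed by a direct verification that the geodesic spray $\he_n$ satisfies $i_{\he_n}\sigma=0$ and $\varpi(\he_n)=1$. The paper instead computes the kernel of $\sigma$ explicitly: it writes a general $X\in\Vect(SM)$ in the $\mathrm{hv}$-frame $X=X^A\he_A+X^n\he_n+X^\barA\he_\barA+X^\barn\he_\barn$, imposes the constraint $F=1$ via a Lagrange multiplier $d\omega_H(X)+\lambda\,\omega^\barn=0$, and reads off from (\ref{domegaH}) that $X^A=X^\barA=\lambda=0$; tangency to $SM$ then forces $X^\barn=0$, whence $\ker\sigma=\bbR\,\he_n$ and the normalization $\varpi(X)=1$ picks out $X=\he_n$. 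Your approach is cleaner conceptually and makes the contact-geometric content explicit without any frame algebra for the uniqueness step. The paper's Lagrange-multiplier computation, on the other hand, is not merely incidental: exactly the same template is reused in Proposition~\ref{proFermat} and, crucially, in the proof of Theorem~\ref{MainTheorem}, where no abstract Reeb argument is available and one must solve for the kernel of a more complicated $2$-form by hand. So the paper's choice here is partly a rehearsal for later.
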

\begin{proof}
Write $X\in\Vect(SM)$ in the form $X=X^A\,\he_A+X^n\,\he_n+X^\barA\,\he_\barA+X^\barn\,\he_\barn$,  see~(\ref{orthonormalBases}), using dummy indices $A$ and $\barA=A+n$, where $A=1,\ldots,n-1$. Since $X\in\Vect(\TMM)$ is tangent to $SM$ iff $X(F)=\omega^\barn(X)=0$, as clear from (\ref{omegabn}), we have $X\in\ker(\sigma)$ iff $d\omega_H(X)+\lambda\omega^\barn=0$ where $\lambda\in\bbR$ is a Lagrange multiplier. The latter equation readily yields, with the help of~(\ref{domegaH}), $\delta_{AB}(X^\barA\,\omega^B-X^B\,\omega^\barA)+\lambda\omega^\barn=0$, hence $X^A=X^\barA=0$, and $\lambda=0$. Then $X=X^n\,\he_n+X^\barn\,\he_\barn$ is actually tangent to~$SM$ if $\omega^\barn(X)=X^\barn=0$, which leads to 
\begin{equation}
X\in\ker(\sigma)
\qquad
\Longleftrightarrow
\qquad
X=X^n\he_n
\label{kersigma0}
\end{equation}
for some $X^n\in\bbR$. Thus, $(SM,\sigma)$ is a presymplectic manifold. The quotient $SM/\ker(\sigma)$ is the set of oriented Finsler geodesics, which (if endowed with a smooth structure) becomes a $(2n-2)$-dimensional symplectic manifold, see~\cite{Alv}.

We then find that $\varpi(X)=X^n$, and the constraints (\ref{geodesicSpray}) express the fact that~$X$ is the Reeb vector field, and retains the form $X=\he_n$, which, in view of (\ref{en}), we can write
\begin{equation}
X=\dirsupp^i\frac{\delta}{\delta{}x^i}.
\label{geodesicSprayBis}
\end{equation}
The vector field (\ref{geodesicSprayBis}) of $SM$ is the \textit{geodesic spray} \cite{BCS} of the Finsler structure.
\end{proof}

\subsection{Geometrical optics in anisotropic media}

The geodesic spray, $X$, given by (\ref{geodesicSprayBis}), integrates to a Finsler geodesic flow, $\varphi_t$, on the bundle $SM$ via the ordinary differential equation 
$
d\varphi_t(x,\dirsupp)/dt=X(\varphi_t((x,\dirsupp))
$
for all~$t\in{}I\subset\bbR$. The latter translates as
\begin{equation}
\left\{
\begin{array}{lcl}
\displaystyle
\frac{dx^i}{dt}&=&\dirsupp^i\\[8pt]
\displaystyle
\frac{d\dirsupp^i}{dt}&=&-G^i(x,\dirsupp)
\end{array}
\right.
\label{dotx=udotu=-G}
\end{equation}
where the acceleration components (or spray coefficients) read
$G^i=N^i_{\;j}y^j$ (see \cite{BCS}), for $i=1,\ldots,n$. 
The geodesic flow then defines geodesics per se, $x_t=\pi(\varphi_t(x,\dirsupp))$, of the base manifold,~$M$, with initial data $(x,\dirsupp)\in{}SM$. 

\subsubsection{The Fermat Principle}\label{FermatSection}

\begin{defi}\cite{Kne}
Two Finsler structures  $(M,F)$ and $(M,\widetilde{F})$ are said to be conformal\-ly related if $\widetilde{F}(x,y)=\sfn(x)F(x,y)$ for some $\sfn\in{}C^\infty(M,\bbR^*_+)$.
\end{defi}

If $F$ is a Riemannian structure, then $\widetilde{F}$ is a Riemannian structure conformal\-ly related to $F$, since their metric tensors are such that $\widetilde{\rg}_{ij}(x)=\sfn^2(x)\,\rg_{ij}(x)$. In this case, the geodesics of $(M,\widetilde{F})$ may be interpreted as the trajectories of light in a medium, modeled on the Riemannian manifold $(M,F)$, and endowed with a refractive index $\sfn$. This is, in essence, the \textit{Fermat Principle} of geometrical optics. 

\begin{pro}\label{proFermat}
Let $(M,F)$ and $(M,\widetilde{F})$ be conformally related Finsler structures, i.e., be such that $\widetilde{F}(x,y)=\sfn(x)F(x,y)$ for a given $\sfn\in{}C^\infty(M,\bbR^*_+)$, called their relative refractive index. Their geodesic sprays are related as follows:
\begin{equation}
X=\dirsupp^i\frac{\delta}{\delta{x^i}}
\qquad
\&
\qquad
\widetilde{X}=\frac{1}{\sfn}\dirsupp^i\frac{\delta}{\delta{x^i}}+\frac{1}{\sfn^3}(\rg^{ij}-2\dirsupp^i\dirsupp^j)\frac{\partial\sfn}{\partial{}x^j}\,\frac{\partial}{\partial y^i},
\label{ConfGeodesicSpray}
\end{equation}
where $\dirsupp^i=y^i/F$, for $i=1,\ldots,n$. Putting $\dx^i=\sfn\widetilde{X}(x^i)$, and $\dy^i=\sfn\widetilde{X}(y^i)$, we obtain the equations of the geodesics of $(M,\widetilde{F})$ in the following guise:
\begin{equation}
\left\{
\begin{array}{rcl}
\dx^i&=&\dirsupp^i\\[6pt]
\nabla_\dirsupp(\sfn\,\dirsupp)^i&=&\displaystyle\rg^{ij}\frac{\partial\sfn}{\partial{}x^j}
\end{array}
\right.
\label{FinslerFermatEqs}
\end{equation}
where $\nabla_\dirsupp$ is the covariant derivative with reference vector $u$, defined, for all vector field, $v$, along the curve with velocity $u$, by  $\nabla_\dirsupp(v)^i=\dv^i+\Gamma^i_{jk}(x,\dirsupp)\dirsupp^jv^k$.
\end{pro}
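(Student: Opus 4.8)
The plan is to reduce the statement to a computation of the spray coefficients of the rescaled metric and then to repackage the result in the two desired forms. First I would record the elementary but crucial fact that, since $\sfn$ depends on the base point $x$ only, the Hessian definition (\ref{gij}) applied to $\widetilde{F}^2=\sfn^2F^2$ gives $\widetilde{\rg}_{ij}=\sfn^2\rg_{ij}$, hence $\widetilde{\rg}^{ij}=\sfn^{-2}\rg^{ij}$. This is the single place where the purely horizontal nature of $\sfn$ is used, and it is what makes the two fundamental tensors genuinely conformal.

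Next I would compute the spray coefficients $\widetilde{G}^j=\half\widetilde{\rg}^{jk}\big((\widetilde{F}^2)_{y^kx^l}\,y^l-(\widetilde{F}^2)_{x^k}\big)$ of $\widetilde{F}=\sfn F$. Expanding $\widetilde{F}^2=\sfn^2F^2$ by the Leibniz rule and using the homogeneity identity $(F^2)_{y^k}=2\rg_{km}y^m$ (equivalently $F_{y^k}=\dirsupp_k$), the $G^j$ of $F$ reappears with the factor $\sfn^2\widetilde{\rg}^{jk}=\rg^{jk}$, while the remaining terms organize into $\widetilde{G}^j=G^j+2\sfn^{-1}(\sfn_{x^l}y^l)y^j-\sfn^{-1}F^2\rg^{jk}\,\partial\sfn/\partial x^k$. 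Inserting $y^i=F\dirsupp^i$ (see (\ref{ell})) this becomes the compact formula $\widetilde{G}^j=G^j-\sfn^{-1}F^2(\rg^{jk}-2\dirsupp^j\dirsupp^k)\,\partial\sfn/\partial x^k$, which is the computational heart of the proof.

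Then I would feed this into the description of the geodesic spray from Theorem \ref{ThmGeodesicSpray}, (\ref{geodesicSprayBis}). For $\widetilde{F}$ the spray is $\widetilde{X}=\widetilde{\dirsupp}^i\,\widetilde{\delta}/\widetilde{\delta}x^i$ with $\widetilde{\dirsupp}^i=y^i/\widetilde{F}=\sfn^{-1}\dirsupp^i$; writing it in the natural frame via $\widetilde{\dirsupp}^i\widetilde{N}^j_{\;i}=\widetilde{F}^{-1}\widetilde{G}^j$ and then re-expressing $\partial/\partial x^i$ through the $F$-horizontal vectors $\delta/\delta x^i$ of (\ref{deltaxi}), the $F$-spray part recombines into $\sfn^{-1}\dirsupp^i\,\delta/\delta x^i$ and leaves the gradient term $(F/\sfn^2)(\rg^{jk}-2\dirsupp^j\dirsupp^k)(\partial\sfn/\partial x^k)\,\partial/\partial y^j$ as the anomalous vertical contribution. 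The decisive bookkeeping step, and the one I expect to be the main source of error, is the restriction to the evolution space of $\widetilde{F}$: on $\widetilde{SM}=\widetilde{F}^{-1}(1)$ one has $F=1/\sfn$, which turns $F/\sfn^2$ into $1/\sfn^3$ and yields exactly (\ref{ConfGeodesicSpray}). One must keep rigorously separate the two supporting elements $\dirsupp$ and $\widetilde{\dirsupp}=\sfn^{-1}\dirsupp$, and remember that all unadorned objects ($\rg$, $\dirsupp$, $\delta/\delta x^i$) refer to $F$ whereas the spray being computed is that of $\widetilde{F}$.

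Finally, for the covariant reformulation I would read off from (\ref{ConfGeodesicSpray}) that $\widetilde{X}(x^i)=\sfn^{-1}\dirsupp^i$ and, using $\delta y^k/\delta x^i$ together with $F=1/\sfn$, that $\widetilde{X}(y^i)=-G^i+\sfn^{-3}(\rg^{ij}-2\dirsupp^i\dirsupp^j)\,\partial\sfn/\partial x^j$. Setting $\dx^i=\sfn\widetilde{X}(x^i)=\dirsupp^i$ and $\dy^i=\sfn\widetilde{X}(y^i)$, I would substitute into $\nabla_\dirsupp(\sfn\dirsupp)^i=\tfrac{d}{dt}(\sfn\dirsupp^i)+\sfn\,\Gamma^i_{jk}(x,\dirsupp)\dirsupp^j\dirsupp^k$. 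Using $\Gamma^i_{jk}\dirsupp^j\dirsupp^k=F^{-2}G^i$ (from (\ref{NijChern})), $\dot{\sfn}=(\partial\sfn/\partial x^j)\dirsupp^j$, and once more $F=1/\sfn$ on $\widetilde{SM}$, the terms proportional to $\dirsupp^i\dirsupp^j\,\partial\sfn/\partial x^j$ and to $G^i$ cancel, leaving $\nabla_\dirsupp(\sfn\dirsupp)^i=\rg^{ij}\,\partial\sfn/\partial x^j$, which is (\ref{FinslerFermatEqs}). The argument is thus a direct, if delicate, calculation, with no conceptual obstacle beyond careful handling of the two conformal scales and the on-shell substitution $F=1/\sfn$.
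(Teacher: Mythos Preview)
Your argument is correct; every step checks out, including the delicate identification $F=1/\sfn$ on $\widetilde{S}M$ and the cancellations in the covariant-derivative computation at the end.

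Your route, however, differs from the paper's. You compute the spray coefficients~$\widetilde{G}^j$ of $\widetilde{F}$ directly from the defining formula, obtain the conformal change law $\widetilde{G}^j=G^j-\sfn^{-1}F^2(\rg^{jk}-2\dirsupp^j\dirsupp^k)\partial_k\sfn$, and then quote the coordinate form~(\ref{geodesicSprayBis}) of the spray. The paper instead stays within its presymplectic framework: it observes $\widetilde{\omega}_H=\sfn\,\omega_H$, hence $\widetilde{\sigma}=\sfn\,\sigma+d\sfn\wedge\omega_H$, and then solves the Reeb conditions (\ref{geodesicSpray}) for $\widetilde{X}$ in the $\mathrm{hv}$-orthonormal frame $(\he_A,\he_n,\he_{\barA},\he_\barn)$ with a Lagrange multiplier for the constraint $\widetilde{F}=1$, obtaining $\widetilde{X}=\sfn^{-1}[\he_n+(\sfn^A/\sfn)\he_\barA-(\sfn^n/\sfn)\he_\barn]$ before translating back to coordinates. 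Your approach is more elementary and self-contained (it needs no orthonormal frames and no contact-geometric machinery), and it produces the conformal spray-coefficient formula as a useful by-product. The paper's approach, by contrast, is methodologically uniform with the rest of the article---the same Reeb/kernel computation is recycled later for the spin-coupled $2$-form $\sigma_{\mu_0}$---and makes the role of the Hilbert $1$-form explicit. For the covariant form~(\ref{FinslerFermatEqs}) the paper uses the shortcut $(\sfn\dirsupp)^i=\sfn^2 y^i$ on $\widetilde{S}M$ and computes $\nabla_\dirsupp(\sfn^2 y)^i$, whereas you differentiate $\sfn\dirsupp^i$ directly; both lead to the same cancellation.
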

\begin{proof}
The Hilbert $1$-forms are related by $\widetilde{\omega}_H=\sfn\,\omega_H$, and their exterior derivatives by $\widetilde\sigma=\sfn\,\sigma+d\sfn\wedge\omega_H$. In other words $\widetilde\sigma=\sfn\,\delta_{AB}\,\omega^\barA\wedge\omega^B+ \sfn_A\,\omega^A\wedge\omega^n$, where $\sfn_A=e^i_A\,\partial_i\sfn$. 

Reproducing the proof of Theorem \ref{ThmGeodesicSpray}, we will decompose $\widetilde{X}\in\Vect\big(\widetilde{S}M\big)$ as $\widetilde{X}=\widetilde{X}^A\,\he_A+\widetilde{X}^n\,\he_n+\widetilde{X}^\barA\,\he_\barA+\widetilde{X}^\barn\,\he_\barn$,  with the same notation as before. Again $\widetilde{X}\in\ker(\widetilde{\sigma})$ iff $d\widetilde{\omega}_H(\widetilde{X})+\lambda\widetilde{\omega}^\barn=0$ for some $\lambda\in\bbR$. (Note that, in view of (\ref{omegabn}), we have $\widetilde{\omega}^\barn=\omega^\barn+d\sfn/\sfn$.) This equation readily leaves us with $\widetilde{X}^A=0$, and $\widetilde{X}^\barA=\sfn^A\,\widetilde{X}^n$, for all $A=1,\ldots,n-1$, together with $\lambda=0$. 

At last, $\widetilde{X}$ is tangent to $\widetilde{S}M$ if $\widetilde{\omega}^\barn(\widetilde{X})=0$, i.e., if $\widetilde{X}^\barn=-(\sfn^n/\sfn)\widetilde{X}^n$. Then, $\widetilde{X}$ is the Reeb vector field for $\widetilde{F}$ if $\widetilde{\omega}_H(\widetilde{X})=1$, i.e., if $\widetilde{X}^n=1/\sfn$. The geodesic spray of the Finsler structure $(M,\widetilde{F})$ is thus
$$
\widetilde{X}=\frac{1}{\sfn}\left[\he_n+\frac{\sfn^A}{\sfn}\,\he_\barA-\frac{\sfn^n}{\sfn}\,\he_\barn\right]
$$
while that of the Finsler structure $(M,F)$ reduces to $X=\he_n$ by letting $\sfn=1$.

We thus recover (\ref{ConfGeodesicSpray}) via (\ref{orthonormalBases}) and (\ref{eni}), and also by the following fact, viz., $\sfn^A\,\he_\barA-\sfn^n\,\he_\barn=(\delta^{AB}e^i_A e^j_B-e^i_n e^j_n)\partial_j\sfn\,F\partial/\partial{y^i}=(\rg^{ij}-2\dirsupp^i\dirsupp^j)(\partial_j\sfn/\sfn)\partial/\partial{y^i}$, since~$\widetilde{F}=\sfn\,F=1$ on $\widetilde{S}M$.

Let us now derive Equations (\ref{FinslerFermatEqs}); we first notice that $\dx^i=\sfn\widetilde{X}(x^i)=\dirsupp^i$, and also that $\dy^i=\sfn\widetilde{X}(y^i)=-N^i_j\,\dirsupp^j+(F/\sfn)(\rg^{ij}-2\dirsupp^i\dirsupp^j)\partial_j\sfn$. Defining, as in, e.g.,~\cite{BCS}, the covariant derivative of the vector field $v$, with reference vector $u$, by the expression $\nabla_\dirsupp(v)^i=\dv^i+\Gamma^i_{jk}\dirsupp^jv^k=\dv^i+N^i_j\,v^j/F$, see (\ref{NijChern}), enables us to compute the ``geodesic acceleration'' $\nabla_\dirsupp(\sfn\,\dirsupp)$. Since $(\sfn\,\dirsupp)^i=(\sfn^2\,y)^i$, we get
$\nabla_\dirsupp(\sfn\,\dirsupp)^i=2\sfn\dot{\sfn}\,y^i+\sfn^2\nabla_\dirsupp(y)^i=2\sfn\,\dirsupp^j\partial_j\sfn\,y^i+\sfn^2(\dy^i+N^i_j\,\dirsupp^j)=2\sfn\,\dirsupp^j\partial_j\sfn\,y^i+\sfn^2(F/\sfn)(\rg^{ij}-2\dirsupp^i\dirsupp^j)\partial_j\sfn=\rg^{ij}\partial_j\sfn$, and we are done.
\end{proof}

\begin{rmk}
The differential equations (\ref{FinslerFermatEqs}) generalize, to the Finsler framework, the \textit{Fermat equations} ruling the propagation of light in a Riemannian manifold, through a dielectric medium of refractive index $\sfn$.
\end{rmk}

\subsubsection{Finsler optics}

It has originally been envisioned by Finsler himself (see, e.g., \cite{AIM,Ing}) that the \textit{indicatrix} $S_xM=\{\dirsupp\in{}T_xM\,\vert\,F(x,\dirsupp)=1\}$ of a Finsler structure $(M,F)$ might serve as a model for the geometric locus of the ``phase velocity'' of light waves at a point~$x\in{}M$. The fact that, in anisotropic optical media, the velocity of a (plane) light-wave specifically depends upon the direction of its propagation, prompted him to put forward a classical (as opposed to field-theoretical) model of geometrical optics in anisotropic, non dispersive, media ruled by Finsler structures. Finsler geodesics have therefore consistently received the interpretation of light trajectories in such optical media. Let us mention, among many an example, an application of Finsler optics to dynamical systems engendered by Finsler billiards \cite{GT}.

When specialized to Riemannian structures, e.g., to Fermat structures presented in Section \ref{FermatSection}, Finsler geodesics are nothing but plain Riemannian geodesics, regarded as light rays in (non homogeneous) isotropic media. See, e.g., \cite{CN1,CN2,DHH,DHH2}.

This justifies the following principle of Finsler geometrical optics.

\begin{defi}
The light rays in a non-homogeneous, anisotropic, optical medium described by a Finsler structure $(M,F)$ are the oriented geodesics associated with the geodesic spray (\ref{geodesicSprayBis}) of this Finsler structure. 
\end{defi}

\goodbreak

For example, \textit{birefringent} media (solid or liquid crystalline media) can be described by a pair of Finsler metrics, namely, the \textit{ordinary} (resp. \textit{extra\-ordinary}) metric $F_o$ (resp. $F_e$) attached to a (three-dimensional) manifold, $M$, representing the anisotropic optical medium. Those are respectively given, in the particular case of \textit{uniaxial} crystals, in terms of a pair of Riemannian metrics $a=a_{ij}(x)\,dx^i\otimes{}dx^j$, and $b=b_{ij}(x)\,dx^i\otimes{}dx^j$ on $M$, by \cite{AIM,Ing}
\begin{eqnarray}
\label{Fo}
F_o(x,y)&=&\sqrt{a_{ij}(x)\,y^iy^j},\\[6pt]
\label{Fe}
F_e(x,y)&=&\frac{a_{ij}(x)\,y^iy^j}{\sqrt{b_{ij}(x)\,y^iy^j}}.
\end{eqnarray}
The geodesics of the metric $F_e$ are meant to describe extra\-ordinary light rays, whereas those of the Riemannian metric, $F_o$, will merely lead to ordinary rays. 

\begin{rmk}\label{Rmkab}
{\rm
Let us emphasize that $(M,F_e)$, where $F_e$ is as in (\ref{Fe}), is a Finsler structure if its fundamental tensor
\begin{equation}
\rg^e_{ij}=F_e^2\left[\frac{2\,a_{ij}}{a(y,y)}-\frac{b_{ij}}{b(y,y)}\right]
+\frac{4\,c_i\,c_j}{b(y,y)^3},
\label{geij}
\end{equation}
where $c_i=a(y,y)\,b_{ij}(x)y^j-b(y,y)\,a_{ij}(x)y^j$, is positive definite. This is, indeed, the case if the quadratic forms $a$ and $b$ 
verify 
$b/{\sqrt{2}}<a<b\,\sqrt{2}$,
everywhere on~$\TMM$.
}
\end{rmk}

The more complex case of \textit{biaxial} optical media is also studied in \cite{AIM}, and gives rise to a pair of specific Finsler metrics
\begin{equation}
F^\pm(x,y)=\frac{a_{ij}(x)\,y^iy^j}{\sqrt{b^\pm_{ij}(x)\,y^iy^j}}
\label{Fpm}
\end{equation}
where $a,b^+$, and $b^-$ are Riemannian metrics characterizing the optical properties of the anisotropic medium. (Let us note that Remark \ref{Rmkab} applies just as well for the metrics (\ref{Fpm})).

\subsubsection{The example of birefringent solid crystals}

Let us review how Finsler metrics of the form (\ref{Fpm}), or (\ref{Fo}) and (\ref{Fe}), arise in the particular case of anisotropic solid crystals. To that end, we revisit the original derivation \cite{AIM} of the Minkowski norms that account for the propagation of light in anisotropic dielectric solids with principal (positive) velocities $v_1,v_2,v_3$.

\goodbreak

In the framework of Maxwell's wave optics, the Fresnel equation of wave normals
$$
\hu_1^2(\Vert\bu\Vert^2-v_2^2)(\Vert\bu\Vert^2-v_3^2)+
\hu_2^2(\Vert\bu\Vert^2-v_3^2)(\Vert\bu\Vert^2-v_1^2)+
\hu_3^2(\Vert\bu\Vert^2-v_1^2)(\Vert\bu\Vert^2-v_2^2)=0
$$
expresses the dependence of the phase velocity, $\bu$, of a plane wave upon its direction of propagation, $\bhu=\bu/\Vert\bu\Vert$, in such a medium; we denote, here, by $\Vert\,\cdot\,\Vert$ the norm on standard Euclidean space $(\bbR^3, \la\,\cdot\,,\,\cdot\,\ra)$.

- Assuming, e.g., $v_1>v_2>v_3$, one solves the Fresnel equation for the norm of the phase velocity, viz., $\Vert\bu\Vert^2=A+B\cos(\theta'\pm\theta'')$ where $\theta'$ and $\theta''$ are the angles between the direction of propagation, $\bhu$, and the (oriented) optical axes 
$\be'$ and $\be''$;
the scalars $A=\half(v_1^2+v_3^2)$, and $B=\half(v_1^2-v_3^2)$, as well as the vectors $\be'$, and $\be''$, are 
characteristic of the crystal \cite{BW}. 
The Minkowski norm, $F$, associated with each
solution of the Fresnel equation is easily found \cite{AIM} using Okubo's trick that amounts to the replacement $\bu\rightsquigarrow\by/F(\by)$, insuring that $F(\bu)=1$. Easy calculation leads us to $\Vert\bu\Vert^2=\Vert\by\Vert^2/F(\by)^2=A+B\Vert\by\Vert^{-2}\left[\la\be',\by\ra\la\be'',\by\ra\mp\Vert\be'\times\by\Vert\Vert\be''\times\by\Vert\right]$, that is, to
\begin{equation}
F^\pm(\by)=\frac{\Vert\by\Vert^2}{\sqrt{A\Vert\by\Vert^2{\strut}+B\Big[\la\be',\by\ra\la\be'',\by\ra\mp\Vert\be'\times\by\Vert\Vert\be''\times\by\Vert\Big]}},
\label{Fpmbis}
\end{equation}
where $\times$ denotes the standard Euclidean cross-product. This expression admits straightforward generalizations to the case of fluid crystals, Faraday-active media, etc., where the quantities $A$, $B$, $\be'$, and $\be''$ become position-dependent; it ultimately leads to the expression (\ref{Fpm}) of a pair of Finsler metric for general biaxial media.

- The case of uniaxial media is treated by assuming, e.g., $v_1=v_2>v_3$, which 
implies $\be'=\be''(=\be)$.
The Minkowski norms (\ref{Fpmbis}) admit a prolongation to this situation and read
\begin{eqnarray}
\label{Fobis}
F^-(\by)&=&\frac{\Vert\by\Vert}{v_1},\\
\label{Febis}
F^+(\by)&=&\frac{\Vert\by\Vert^2}{\sqrt{v_3^2\Vert\by\Vert^2{\strut}+(v_1^2-v_3^2)\la\be,\by\ra^2}}.
\end{eqnarray}
Those correspond, respectively, to an ordinary Euclidean metric, $F_o=F^-$, and to an extraordinary Minkowski metric, $F_e=F^+$, again generalized by~(\ref{Fo}), and~(\ref{Fe}).

- The last case, for which $v_1=v_2=v_3$, clearly leads to a single Euclidean metric, namely $F(\by)=\Vert\by\Vert/v_1$, that rules geometrical optics in isotropic media with refractive index $\sfn=1/v_1$.


\section{Geometrical spinoptics in Finsler spaces}
\label{SectionFinslerSpinoptics}

So far, the polarization of light has been neglected in the various formulations of geometrical optics. We contend that spinning light rays do, indeed, admit a clear cut geometrical status allowing for a natural extension of plain geometrical optics to the case of circularly polarized light rays (Euclidean photons) traveling in arbitrary non dispersive optical media. 

The touchstone of our viewpoint about geometrical optics for spinning light is the \textit{Euclidean symmetry} of the manifold of oriented lines in (flat) Euclidean space. This fundamental symmetry will be taken as a guiding principle to set up a model that could describe the geometry of spinning light rays in quite general, crystalline and liquid, optical media. See~\cite{DHH} and~\cite{DHH2} for a first approach to geometrical \textit{spinoptics} in inhomogeneous, isotropic, media.


We will therefore start by some elementary facts about the symplectic structure of the space of oriented lines in Euclidean space. The consideration of the generic coadjoint orbits of the Euclidean group will then be justified on physical grounds. 

Let us recall that, if we denote by $\Ad$ the adjoint action of a Lie group, and by $\Coad$, its coadjoint action, then the orbits of the latter action inherit a canonical structure of symplectic manifolds. These homogeneous symplectic manifolds play a central r\^ole in mechanics and physics, where some of them may be interpreted as the elementary systems associated with the symmetry group under consideration~\cite{Sou}. 

The following construction is standard.

\goodbreak

\begin{thm}\label{ThmCoadOrbits}
Let $G$ be a (finite-dimensional) Lie group $G$ with Lie algebra~$\fg$. Fix $\mu_0\in\fg^*$ and define the following $1$-form 
\begin{equation}
\varpi_{\mu_0}=\mu_0\cdot\vartheta_G,
\label{canonical1form}
\end{equation}
where $\vartheta_G$ is the left-invariant Maurer-Cartan $1$-form of $G$. Then, $\sigma_{\mu_0}=d\varpi_{\mu_0}$ is a presymplectic $2$-form on $G$ which 
is the pull-back of
the canonical Kirillov-Kostant-Souriau 
symplectic $2$-form on the $G$-coadjoint orbit 
\begin{equation}
\cO_{\mu_0}=\{\mu=\Coad_g(\mu_0)\,\vert\,g\in{}G\}\cong{}G/G_{\mu_0},
\end{equation}
where $G_{\mu_0}$ is the stabilizer of $\mu_0\in\fg^*$.
\end{thm}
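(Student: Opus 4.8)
The plan is to establish the three assertions of the theorem in turn: that $\sigma_{\mu_0}=d\varpi_{\mu_0}$ is closed and left-invariant, that it descends to the orbit $\cO_{\mu_0}$, and that what it descends to is precisely the Kirillov--Kostant--Souriau (KKS) form. The backbone of the argument is the structure equation for the Maurer--Cartan form, namely $d\vartheta_G=-\half[\vartheta_G,\vartheta_G]$, which controls everything.

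First I would compute $\sigma_{\mu_0}$ explicitly. Since $\varpi_{\mu_0}=\mu_0\cdot\vartheta_G$ is the contraction of the fixed covector $\mu_0$ with the Lie-algebra-valued $1$-form $\vartheta_G$, the structure equation gives, for left-invariant vector fields $X,Y$ associated with $\xi,\eta\in\fg$,
\begin{equation}
\sigma_{\mu_0}(X,Y)=-\mu_0\cdot[\xi,\eta]=\la\mu_0,[\xi,\eta]\ra.
\label{sigmaAtIdentity}
\end{equation}
Because $\vartheta_G$ is left-invariant and $\mu_0$ is constant, $\varpi_{\mu_0}$ and hence $\sigma_{\mu_0}$ are left-invariant; closedness is automatic as $\sigma_{\mu_0}=d\varpi_{\mu_0}$ is exact. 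The next step is to identify the kernel of $\sigma_{\mu_0}$: from \eqref{sigmaAtIdentity} a left-invariant field $X_\xi$ lies in the kernel at the identity iff $\la\mu_0,[\xi,\eta]\ra=0$ for all $\eta$, i.e. iff $\xi\in\fg_{\mu_0}$, the Lie algebra of the stabilizer $G_{\mu_0}$ (the infinitesimal coadjoint action of $\xi$ annihilates $\mu_0$). By left-invariance the kernel distribution is exactly the left-translate of $\fg_{\mu_0}$, which is the tangent distribution to the fibers of the projection $p:G\to G/G_{\mu_0}\cong\cO_{\mu_0}$.

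I would then invoke the standard descent criterion: a closed form whose kernel coincides with the tangent spaces to the fibers of a submersion, and which is invariant under the fiber action, pushes forward to a well-defined closed form on the base. Here the fiber action is right multiplication by $G_{\mu_0}$; one checks $R_h^*\varpi_{\mu_0}=\varpi_{\mu_0}$ for $h\in G_{\mu_0}$ using $R_h^*\vartheta_G=\Ad_{h^{-1}}\vartheta_G$ together with $\Coad_h(\mu_0)=\mu_0$. This yields a symplectic form $\omega_{\mathrm{KKS}}$ on $\cO_{\mu_0}$ with $p^*\omega_{\mathrm{KKS}}=\sigma_{\mu_0}$, which is exactly the assertion that $\sigma_{\mu_0}$ is the pull-back of a $2$-form on the orbit. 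Finally, to confirm this form \emph{is} the KKS form, I would use the orbit parametrization $\mu=\Coad_g(\mu_0)$ and differentiate: the tangent space at $\mu$ is spanned by fundamental vector fields $\xi_{\cO}(\mu)=\coad_\xi^*\mu$, and formula \eqref{sigmaAtIdentity} translated through $p$ gives $\omega_{\mathrm{KKS}}(\xi_\cO,\eta_\cO)|_\mu=\la\mu,[\xi,\eta]\ra$, which is the defining expression of the KKS symplectic form on a coadjoint orbit.

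The main obstacle, and the only delicate point, is the descent step: one must verify that the kernel of $\sigma_{\mu_0}$ is not merely contained in but \emph{equal} to the vertical distribution of $p$, and that $\sigma_{\mu_0}$ is genuinely basic (both horizontal and invariant) for the $G_{\mu_0}$-action, so that the quotient form is well-defined and smooth. This hinges on the clean identification $\ker\sigma_{\mu_0}=\fg_{\mu_0}$ (left-translated), which requires that the pairing $\la\mu_0,[\,\cdot\,,\,\cdot\,]\ra$ on $\fg/\fg_{\mu_0}$ be nondegenerate --- a fact that follows directly from the definition of $\fg_{\mu_0}$ as the radical of this pairing. Once this is in hand the remaining verifications are formal consequences of the Maurer--Cartan equation and the equivariance of $\vartheta_G$, so I do not expect them to present further difficulty.
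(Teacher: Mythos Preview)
The paper does not actually supply a proof of this theorem: it introduces the statement with the phrase ``The following construction is standard'' and then moves on. So there is no proof in the paper to compare against; your write-up is a correct rendering of the standard argument (Maurer--Cartan structure equation, identification of $\ker\sigma_{\mu_0}$ with the left-translates of $\fg_{\mu_0}$, basicness under the right $G_{\mu_0}$-action, and matching with the KKS formula on the orbit).

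One small quibble: in your displayed equation you write $\sigma_{\mu_0}(X,Y)=-\mu_0\cdot[\xi,\eta]=\la\mu_0,[\xi,\eta]\ra$, which, read literally, flips a sign between the two pairings. This is presumably a typo or a tacit change of sign convention; the remainder of the argument is insensitive to it, since both the kernel computation and the identification with the KKS form only require the bilinear form $(\xi,\eta)\mapsto\la\mu_0,[\xi,\eta]\ra$ up to an overall sign. You may want to fix the sign for consistency with whatever convention you adopt for the KKS form.
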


\goodbreak

\subsection{Spinoptics and the Euclidean group}

From now on we will confine considerations to three-dimensional configuration spaces to comply with the physical principles of geometrical optics.

\goodbreak

An oriented straight line, $\xi$, in Euclidean affine space $(E^3, \la\,\cdot\,,\,\cdot\,\ra)$ is determined by its direction, a vector $\bdirsupp\in\bbR^3$ of unit length, and an arbitrary point $Q\in\xi$. Having chosen an origin, $O\in{}E^3$, we may consider the vector $\bq=Q-O$, orthogonal to~$\bdirsupp $. The set of oriented, non parametrized, straight lines is thus the smooth manifold
\begin{equation}
\cM=\{\xi=(\bq,\bdirsupp)\in\bbR^3\times\bbR^3\,\vert\,\la\bdirsupp,\bdirsupp\ra=1,\la \bdirsupp,\bq\ra=0\},
\label{StraightLines}
\end{equation}
i.e., the tangent bundle $\cM\cong{}TS^2$ of the round sphere $S^2\subset\bbR^3$,
which has been recognized by Souriau \cite{Sou} as a coadjoint orbit of the group, $\rE(3)$, of Euclidean isometries, and inherits, as such, an $\rE(3)$-invariant symplectic structure. See also~\cite{Igl}. 

Consider the group, $\SE(3)=\SO(3)\ltimes\bbR^3$, of orientation-preserving Euclidean isometries of $(E^3, \la\,\cdot\,,\,\cdot\,\ra,\bvol)$, viewed as the matrix-group whose elements read 
\begin{equation}
g=\pmatrix{R&\bx\cr0&1},
\label{SE3}
\end{equation}
where $R\in\SO(3)$, and $\bx\in\bbR^3$.

The (left-invariant) Maurer-Cartan $1$-form of $\SE(3)$ is therefore given by
\begin{equation}
\vartheta_{\SE(3)}=\pmatrix{\bhomega&\bomega\cr0&0},
\label{MCE3}
\end{equation}
where $\bhomega=R^{-1}dR$, and $\bomega=R^{-1}d\bx$.

Let $\mu=(\bS,\bP)$ denote a point in $\se(3)^*$ where $\se(3)=\so(3)\ltimes\bbR^3$ is the Lie algebra of~$\SE(3)$. We will use the identification $\so(3)\cong\Lambda^2\bbR^3$ (resp. $(\bbR^3)^*\cong\bbR^3$) given by $S^{\;b}_a=S^{ab}$ (resp. $P_a=P^a$), where
\begin{equation}
S^{ab}+S^{ba}=0,
\label{Sab+Sba=0}
\end{equation}
for all $a,b=1,2,3$. The pairing $\se(3)^*\times\se(3)\to\bbR$ will be defined by 
\begin{eqnarray}
\label{paringe3}
(\bS,\bP)\cdot(\bhomega,\bomega)&=&-\half\Tr(\bS\,\bhomega)+\la\bP,\bomega\ra\\[6pt]
\label{pairinge3Bis}
&=&\half\,S^{ab}\,\homega_{ab}+P_a\,\omega^a.
\end{eqnarray}
The coadjoint representation of $\SE(3)$, viz., $\Coad_g\mu\equiv\mu\circ\Ad_{g^{-1}}$, is given by $\Coad_{g}(\bS,\bP)=(R(\bS+\bx\wedge\bP)R^{-1},R\bP)$.
Clearly, $C=\la\bP,\bP\ra$ and $C'=(\bS\wedge\bP)/\bvol$ are co\-adjoint $\SE(3)$-invariants. These are the only invariants of the Euclidean co\-adjoint representation, and fixing~$(C,C')$ or $(C=0,C'')$, where $C''=\half{}S^{ab}S_{ab}$, yields a single coadjoint orbit \cite{GS,MR,Sou}.

\subsubsection{Colored light rays}

Specializing the construction of Theorem \ref{ThmCoadOrbits} to the case $G=\SE(3)$, with $C=p^2$ and $p>0$ together with $C'=0$, we can choose $\mu_0=(\mathbf{0},\bP_0)$ and $\bP_0=(0,0,p)$. The invariant $p$ is the \textit{color}  \cite{Sou} of the chosen coadjoint orbit.

The $1$-form (\ref{canonical1form}) then associated, via the pairing (\ref{pairinge3Bis}), and the Maurer-Cartan $1$-form~(\ref{MCE3}), with the invariant $p$ is thus 
$\varpi_{\mu_0}=\la\bP_0,\bomega\ra$, or
\begin{eqnarray}
\label{varpip0}
\varpi_{\mu_0}&=&P_a\,\omega^a\\
\label{varpip0Bis}
&=&p\,\omega^3.
\end{eqnarray}
Straightforward calculation yields $\varpi_{\mu_0}=p\,\dirsupp_idx^i$,
with $\dirsupp_i=\delta_{ij}\,\dirsupp^j$, where $\bdirsupp=\be_3$ is the third vector of the orthonormal, positively oriented, basis $R=(\be_1,\be_2,\be_3)$. 

The $1$-form (\ref{varpip0Bis}) is, up to an overall multiplicative constant, $p$, equal to the canonical $1$-form~(\ref{varpi}) on the sphere-bundle $SE^3$, associated with the trivial Finsler structure $(E^3,F)$, with $F(\bx,\by)=\sqrt{\la\by,\by\ra}$. Proposition \ref{proFermat} just applies, with $n=1$, and conforms to Euclid's statement that light, whatever its color, travels in vacuum along oriented geodesics of $E^3$.
Indeed, the exterior derivative of the $1$-form $\varpi_{\mu_0}$ of $SE^3$ reads
$
\sigma_{\mu_0}=p\,d\dirsupp_i\wedge{}dx^i.
$
Its kernel, given by
\begin{equation}
X\in\ker(\sigma_{\mu_0})
\qquad
\Longleftrightarrow
\qquad
X=\lambda\,\dirsupp^i\frac{\partial}{\partial{}x^i},
\label{kersigmap0}
\end{equation}
with $\lambda\in\bbR$, yields the (flat Euclidean) geodesic spray $(\lambda=1)$. We will resort to generalizations of this particular construct of the geodesic foliation.

The integral invariant, $\sigma_{\mu_0}$, 
descends, as a symplectic $2$-form, to the quotient $\cM=SE^3/\ker(\sigma_{\mu_0})$ described by $\xi=(\bq,\bdirsupp)$, where $\bq=\bx-\bdirsupp\la\bdirsupp,\bx\ra$. This is the content of Theorem~\ref{ThmCoadOrbits} insuring that $\cM\cong{}TS^2\subset\se(3)^*$ is endowed with a canonical symplectic structure, namely~$(\cM,p\,d\dirsupp_i\wedge{}dq^i)$. 

We note that the $\SE(3)$-coadjoint orbit $\cO_{\mu_0}$ is an $\rE(3)$-coadjoint orbit.

\subsubsection{The spinning and colored Euclidean coadjoint orbits}

The \textit{generic} $\SE(3)$-coadjoint orbits are, in fact, characterized by the \textit{Casimir invariants} $C=p^2$, with $p>0$ (color), and $C'=s p$ where $s\neq0$ stands for \textit{spin}. We call \textit{helicity} the sign of the spin invariant, $\varepsilon=\sign(s)$.


The origin, $\mu_0$, of such an orbit can be freely chosen so as to satisfy the constraints
$S^{ab}P_b=0$, for all $a=1,2,3$, and $\half\,S^{ab}\,S_{ab}=s^2$, together with $P_aP^a=p^2$. One may posit
\begin{equation}
\bS_0=\pmatrix{
0&-s&0\cr
s&0&0\cr
0&0&0
}
\qquad
\hbox{and}
\qquad
\bP_0=(0,0,p).
\label{S0P0}
\end{equation}
The coadjoint orbit, $\cO_{\mu_0}$, passing through $\mu_0=(\bS_0,\bP_0)\in\se(3)^*$  is, again, diffeomorphic to $TS^2$. It is endowed with the symplectic structure coming from the $1$-form~(\ref{canonical1form}) on the group~$\SE(3)$, which now reads
\begin{eqnarray}
\label{varpips}
\varpi_{\mu_0}&=&P_a\,\omega ^a+\half\,S^{ab}\,\homega_{ab}\\[6pt]
\label{varpipsBis}
&=&p\,\omega^3+s\,\homega_{12},
\end{eqnarray}
where $\bhomega$ (resp. $\bomega$) stand for the flat Levi-Civita connection (resp. soldering) $1$-form on the bundle, $\SE(3)\cong\SO(E^3)$, of positively oriented, orthonormal frames of $E^3$.

\begin{rmk}
{\rm
The $1$-form (\ref{varpips}) is the central geometric object of the present study.
}
\end{rmk}


\begin{figure}[!hbp]
\begin{displaymath}
\xymatrix{
 & \SE(3) \ar[dl] \ar[d] &\\
\cO_{\mu_0}\cong{}TS^2 & \ar[l] S\bbR^3 \ar[r] & \bbR^3}
\end{displaymath}
\caption{\label{EuclidPhoton}}
\end{figure}

Straightforward computation then leads to 
$\varpi_{\mu_0}=p\la\be_3,d\bx\ra -s\la\be_1,d\be_2\ra$. 
The exterior derivative of $\varpi_{\mu_0}$ is found \cite{Sou,GS,MR,DHH} to be given by
\begin{equation}
\sigma_{\mu_0}=p\,d\dirsupp_i\wedge{}dx^i-\frac{s}{2}\epsilon_{ijk}\,\dirsupp^i d\dirsupp^j\wedge{}d\dirsupp^k,
\label{sigmasp}
\end{equation}
where, again, $\bdirsupp=\be_3$, and also $\epsilon_{ijk}$ stands for the signature of the permutation $\{1,2,3\}\mapsto\{i,j,k\}$. This $2$-form conspicuously descends to $S\bbR^3\cong\bbR^3\times{}S^2$. Its characteristic foliation is, \textit{verbatim}, given by (\ref{kersigmap0}): spinning light rays in vacuum are nothing but oriented Euclidean geodesics. As shown in the sequel, things will change dramatically for such light rays in a refractive medium.

\begin{rmk}
{\rm
Actually, ``photons'' are characterized by $|s|=\hbar$, where~$\hbar$ is the reduced Planck constant; \textit{right-handed} photons correspond to $s=+\hbar$, and \textit{left-handed} ones to $s=-\hbar$, see \cite{Sou}. We will, nevertheless, leave the parameter $s$ arbitrary when dealing with ``spinning light rays''.
}
\end{rmk} 

The manifold of spinning light rays, $\cO_{\mu_0}=SE^3/\ker(\sigma_{\mu_0})\cong{}TS^2$ (see Fig.~\ref{EuclidPhoton}) is, just as before, parametrized by the pairs $\xi=(\bq,\bu)$ and endowed with the ``twisted'' symplectic $2$-form
$\omega_{\mu_0}=p\,d\dirsupp_i\wedge{}dq^i-\frac{s}{2}\epsilon_{ijk}\,\dirsupp^i d\dirsupp^j\wedge{}d\dirsupp^k$.

Note that the union of two $\SE(3)$-coadjoint orbits defined by the invariants $(p,s)$ and $(p,-s)$ is symplecto\-morphic to a single $\rE(3)$-coadjoint orbit.

\begin{rmk}
{\rm
The $\SE(3)$-coadjoint orbits of spin~$s$, and color $p$, are symplectomorphic to Marsden-Weinstein reduced massless $\SE(3,1)_0$-coadjoint orbits of spin~$s$, at given (positive) energy $E=pc$, where $c$ stands for the speed of light in vacuum, see~\cite{DHH}. This justifies that the color, $p$, of Euclidean light rays corresponds, via reduction, to the energy of relativistic photons, or to the frequency of their associated mono\-chromatic plane waves \cite{Sou,GS,DHH2}.
}
\end{rmk}

\goodbreak

\subsection{Spinoptics in Finsler-Cartan spaces}

With these preparations, we formulate the principles of geometrical spin\-optics, with the premise that (i) Finsler structures should be considered a privileged geometric background for the description of  inhomogeneous, anisotropic, optical media, (ii) the original Euclidean symmetry which pervades geometrical optics should be invoked as a guiding principle in any formulation of spin extensions of geometrical optics. 

\subsubsection{Minimal coupling to the Cartan connection}


\begin{ax}\label{AxiomSpinoptics}
The trajectories of (circularly) polarized light, originating from an Euclidean co\-adjoint orbit $\cO_{\mu_0}\subset\se^*(3)$ with color $p>0$, and spin $s\neq0$, in a three-dimensional Finsler manifold~$(M,F)$, are governed by the following $1$-form on the principal bundle $\SO_2(SM)$ over evolution space $SM=F^{-1}(1)$, namely
\begin{equation}
\varpi_{\mu_0}=\mu_0\cdot\vartheta,
\label{omegaspFinsler0}
\end{equation}
where $\vartheta=(\bhomega,\bomega)$ is the ``affine'' Cartan connection, $\bhomega=(\homega^{\;a}_b)$ denoting the \textit{Cartan connection} (\ref{omegaCartan}), and $\bomega=(\omega^a)$ the coframe (\ref{omegaaLoc}). The characteristic foliation of the $2$-form $\sigma_{\mu_0}=d\varpi_{\mu_0}$ yields the differential equations of spinoptics in a medium described by the considered Finsler structure.
\end{ax}

\goodbreak

The $1$-form (\ref{omegaspFinsler0}) corresponds, \textit{mutatis mutandis}, to the Euclidean $1$-form (\ref{varpips}). (In (\ref{omegaspFinsler0}), and from now on, we simplify the notation and denote by $\vartheta$ the pull-back~$\iota^*\vartheta$ on $\SO_2(SM)$ of the corresponding $1$-form of $\SO_2(\TMM)$.) In Axiom~\ref{AxiomSpinoptics}, the replacement of the Euclidean group $\SE(3)$, see Fig. \ref{EuclidPhoton}, by the principal bundle~$\SO_2(SM)$, see Fig. \ref{FinslerPhoton}, and of the Maurer-Cartan $1$-form by the affine Cartan connection is akin to the so-called \textit{procedure of minimal coupling}. We refer to \cite{Kun} where the minimal coupling of a spinning particle to the gravitational field was originally introduced in the general relativistic framework. See also \cite{Sou2,Ste}.

\begin{figure}[!hbp]
\begin{displaymath}
\xymatrix{
\SO_2(SM) \ar[d] & &\\
SM \ar[r]^{\iota\ \ }& \TMM \ar[r]^{\ \ \pi} & M \\
}
\end{displaymath}
\caption{\label{FinslerPhoton}}
\end{figure}

The choice of the Cartan connection is impelled by the fact that the group underlying Finsler-Cartan geometry is the Euclidean group, $\rE(n)$, which is precisely the fundamental symmetry group of the symplectic model of free photons, for $n=3$. 
Indeed, the ``flat'' $n$-dimensional Finsler-Cartan structure defined by both conditions $\Omega^a=0$, and $\hOmega^{\;a}_b=0$, for all $a,b=1,\ldots,n$, is (locally) iso\-morphic to the Euclidean space, $(E^n,\la\,\cdot\,,\;\cdot\,\ra)$: torsionfreeness yields $A_{abc}=0$, see~(\ref{CartanTorsion}), hence that $(M,F)$ is Riemannian; zero curvature then entails local flatness, via~(\ref{hOmegaGeneral}), and~(\ref{R}). The Euclidean group, $\rE(n)$, is then the group of automorphisms of the flat structure. 
Let us emphasize that the Cartan connection has been originally referred to as the ``connexion euclidienne'' in \cite{Car}.

The $1$-form (\ref{omegaspFinsler0}) of $\SO_2(SM)$ thus reads
\begin{equation}
\varpi_{\mu_0}=P_a\,\omega ^a+\half\,S^{ab}\,\homega_{ab}.
\label{omegaspFinsler}
\end{equation}
In view of the choice (\ref{S0P0}) of the moment $\mu_0=(\bS,\bP)\in\se(3)^*$, viz., 
\begin{equation}
S^{ab}=s\,\epsilon^{AB}\delta^{a}_A\delta^{b}_B
\qquad
\hbox{and}
\qquad
P_a=p\,\delta^{3}_a,
\label{mu0}
\end{equation}
for all $a,b=1,2,3$, where $\epsilon^{AB}=2\delta^{[A}_1\delta^{B]}_2$, for all $A,B=1,2$, we find
\begin{equation}
\varpi_{\mu_0}=p\,\omega^3+s\,\homega_{12}.
\label{omegaspBisFinsler}
\end{equation}

\begin{rmk}\label{RemarkBerry}
{\rm
The $1$-form $\varpi_{\mu_0}$ differs from the Hilbert $1$-form, $\omega_H=\omega^3$, by a spin-term, $\homega_{12}$. This term, canonically associated to a generic Euclidean coadjoint orbit, is new in the framework of Finsler geometry, and akin to the Berry connection~\cite{Ber}. See, e.g., \cite{Bli,OMN}. 
Putting, in (\ref{omegaspBisFinsler}), $s=\varepsilon\hbar$ for photons (where $\varepsilon=\pm1$ is \textit{helicity}), and $p=\hbar{}k$, where $k=\varepsilon/\lambdabar$ is the wave number, we observe that $s/p=\lambdabar$, so that Formula~(\ref{omegaspBisFinsler}) indeed corresponds to~(\ref{FinslerFermatIntro}), up to an overall constant factor.
}
\end{rmk}

\begin{pro} 
The exterior derivative, $d\varpi_{\mu_0}$, of the $1$-form (\ref{omegaspFinsler}) on $\SO_2(SM)$ descends to the evolution space, $SM$, as
\begin{equation}
\sigma_{\mu_0}
=
p\,h_{ab}\,\omega^\bara\wedge\omega^b
+
\half\,\hOmega(S)
-\frac{1}{2}\,S_{ab}\,\omega^\bara\wedge\omega^\barb,
\label{sigmamu0}
\end{equation}
where the $h_{ab}=\delta_{ab}-\delta^3_a\delta^3_b$ denote the frame-components of the angular metric, and $\hOmega(S)=\hOmega_{ab}\,S^{ab}$ 
the spin-curvature coupling $2$-form.
\end{pro}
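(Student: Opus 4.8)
The plan is to exploit the fact that the moment $\mu_0=(\bS,\bP)$ is a \emph{constant} element of $\se(3)^*$, so that, with the explicit choice (\ref{mu0}), the $1$-form (\ref{omegaspFinsler}) collapses to $\varpi_{\mu_0}=p\,\omega^3+s\,\homega_{12}$, see (\ref{omegaspBisFinsler}), and its differential splits cleanly as $\sigma_{\mu_0}=p\,d\omega^3+s\,d\homega_{12}$. I would treat the two pieces separately, each time converting $d$ of a coframe/connection form into curvature and coframe data via the Cartan structure equations, and only at the end reassemble and verify descent to $SM$.

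For the first piece I would use the torsion equation (\ref{CartanTorsion}) in the form $d\omega^3=\omega^b\wedge\homega^{\;3}_b+\hOmega^3$, kill the torsion term through $\hOmega^3=0$ (see (\ref{hOmegan=0})), and insert the identity $\homega^{\;3}_b=-h_{bc}\,\omega^{\barc}$. The latter is the Cartan analogue of (\ref{omegana}): since $A_{3bc}=0$ by (\ref{Al=0}) and (\ref{A}), the defining relation (\ref{omegaCartan}) gives $\homega_{3b}=\omega_{3b}$, whereupon (\ref{omegana}) together with the skew-symmetry (\ref{omegaCartanSkew}) yields the stated form. Substituting produces $p\,d\omega^3=p\,h_{ab}\,\omega^\bara\wedge\omega^b$, the first term of (\ref{sigmamu0}).

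For the second piece I would feed $d\homega_{12}$ into the Cartan curvature structure equation $\hOmega^{\;a}_b=d\homega^{\;a}_b-\homega^{\;c}_b\wedge\homega^{\;a}_c$, obtaining $d\homega_{12}=\hOmega_{12}+\homega_{c2}\wedge\homega_{1c}$. Because the Cartan connection is skew (\ref{omegaCartanSkew}), one has $\homega_{11}=\homega_{22}=0$, so only the $c=3$ term survives the contraction; evaluating it with $\homega_{13}=\omega^{\bar{1}}$ and $\homega_{32}=-\omega^{\bar{2}}$ (again from the Cartan version of (\ref{omegana})) yields a purely vertical $2$-form proportional to $\omega^{\bar{1}}\wedge\omega^{\bar{2}}$. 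Using the antisymmetry of $\hOmega_{ab}$ and the form (\ref{mu0}) of $\bS$, one checks $s\,\hOmega_{12}=\half\hOmega(S)$ and $s\,\omega^{\bar{1}}\wedge\omega^{\bar{2}}=\pm\half S_{ab}\,\omega^\bara\wedge\omega^\barb$, which assembles the remaining two terms of (\ref{sigmamu0}). The overall sign of the last term is fixed by the orientation conventions and is best pinned down by demanding that the whole expression reduce to the flat Euclidean formula (\ref{sigmasp}) when $A_{abc}=0$ and $\hOmega_{ab}=0$.

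The final, and most delicate, step is to justify that $\sigma_{\mu_0}$ --- a priori a form on the total space $\SO_2(SM)$ --- genuinely \emph{descends} to $SM$. Here the point is that $\varpi_{\mu_0}$ itself does \emph{not} descend: under the structure-group rotation of the frame $(e_1,e_2)$ in the $2$-plane orthogonal to $e_3=\dirsupp$, the spin term $\homega_{12}$ shifts by an exact form $d\psi$, so $\varpi_{\mu_0}$ changes by $s\,d\psi$. Its differential $\sigma_{\mu_0}$ is insensitive to this shift, and each assembled term --- $h_{ab}\,\omega^\bara\wedge\omega^b$, the invariant contraction $\hOmega(S)=S^{ab}\hOmega_{ab}=2s\,\hOmega_{12}$, and $S_{ab}\,\omega^\bara\wedge\omega^\barb=2s\,\omega^{\bar{1}}\wedge\omega^{\bar{2}}$ --- is manifestly $\SO(2)$-invariant and horizontal, i.e.\ carries no $d\psi$ leg. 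Hence $\sigma_{\mu_0}$ is basic and defines a $2$-form on $SM$. I expect the main obstacle to be precisely this bookkeeping of the quadratic term in the curvature equation, coupled with the descent argument, rather than any deep conceptual difficulty; the skew-symmetry (\ref{omegaCartanSkew}) and the identity $\homega^{\;3}_b=-h_{bc}\,\omega^{\barc}$ do essentially all the structural work.
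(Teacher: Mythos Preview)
Your proposal is correct and follows essentially the same route as the paper: split $\varpi_{\mu_0}=p\,\omega^3+s\,\homega_{12}$, compute $d\omega^3$ and $d\homega_{12}$ via the structure equations using $\homega^{\;A}_3=\omega^{\barA}$ (from $A_{ab3}=0$ and (\ref{omegana})), and reassemble. The only cosmetic differences are that the paper quotes the already-proved formula $d\omega_H=\delta_{AB}\,\omega^\barA\wedge\omega^B$ rather than rederiving it from the Cartan torsion equation, and handles the descent to $SM$ by the single check $\sigma_{\mu_0}(Z)=0$ for the $\SO(2)$-generator $Z$ (which, since $\sigma_{\mu_0}$ is closed, implies $L_Z\sigma_{\mu_0}=0$ as well), whereas you argue term-by-term basicness; also, the sign of the last term falls out directly from the computation and need not be fixed by comparison with the flat case.
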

\begin{proof}
Let us start with the expression (\ref{omegaspBisFinsler}) of the $1$-form $\varpi_{\mu_0}$.
We have found, see (\ref{domegaH}), that
$d\omega^3=\delta_{AB}\,\omega^\barA\wedge\omega^B$. 
With the help of the structure equations of the Cartan connection, we obtain $d\homega_{12}=\hOmega_{12}+\homega^{\;a}_1\wedge\homega_{a2}=\hOmega_{12}+\homega^{\;3}_1\wedge\homega_{32}=\hOmega_{12}-\homega^{\;1}_3\wedge\homega^{\;2}_3$, using the property~(\ref{omegaCartanSkew}). We then resort to (\ref{omegana}), and to the property (\ref{Al=0}) of the Cartan tensor (that is $A_{ab3}=0$, for all~$a,b=1,2,3$), to find $\homega^{\;A}_3=\omega^\barA$. This yields
$d\homega_{12}=\hOmega_{12}-\omega^{\bar{1}}\wedge\omega^{\bar{2}}$, or, equivalently, $d\homega_{AB}=\hOmega_{AB}-\omega^\barA\wedge\omega^\barB$, for $A,B=1,2$. 
Thus
\begin{equation}
d\varpi_{\mu_0}
=
p\,\delta_{AB}\,\omega^\barA\wedge\omega^B
+
\half\,\hOmega(S)
-\frac{s}{2}\,\epsilon_{AB}\,\omega^\barA\wedge\omega^\barB,
\label{sigmamu0Bis}
\end{equation}
where $\hOmega(S)=s\,\hOmega_{12}=\half{}s\,\epsilon_{AB}\,\hOmega^{AB}$.
In order to prove (\ref{sigmamu0}), we simply use the frame components, $S^{ab}$, of the spin tensor given in (\ref{mu0}). 

To complete the proof, it is enough to verify that $d\varpi_{\mu_0}$, given by (\ref{sigmamu0Bis}), is an integral invariant of the $\SO(2)$-flow generated by 
\begin{equation}
Z=e^i_1\frac{\partial}{\partial{e^i_2}}-e^i_2\frac{\partial}{\partial{e^i_1}}.
\label{Z}
\end{equation}
This is, indeed, the case since $d\varpi_{\mu_0}(Z)=0$.
\end{proof}


\subsubsection{The Finsler-Cartan spin tensor}

Let us now give a construction of the \textit{spin tensor} on the indicatrix-bundle, $SM$, that will be useful in the sequel.

\begin{lem} The following $3$-form of $\SO_2(SM)$, viz.,
\begin{equation}
\Vol=\omega^1\wedge\omega^2\wedge\omega^3
\label{vol}
\end{equation}
is an integral invariant of the flow generated by the vector field, $Z$, given by (\ref{Z}). It descends to $SM$ as $\vol=\frac{1}{6}\vol_{ijk}(x,\dirsupp)\,dx^i\wedge{}dx^j\wedge{}dx^k$, with
\begin{equation}
\vol_{ijk}(x,\dirsupp)=\sqrt{\det{(\rg_{lm}(x,\dirsupp))}}\,\epsilon_{ijk},
\label{volijk}
\end{equation}
where $i,j,k,l,m=1,2,3$.
\end{lem}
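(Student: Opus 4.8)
The plan is to establish three facts about the $3$-form $\Vol=\omega^1\wedge\omega^2\wedge\omega^3$ of $\SO_2(SM)$ — that it is horizontal, $\iota_Z\Vol=0$, that it is invariant, $\cL_Z\Vol=0$, and then to read off its push-down coefficient on $SM$. The first two properties together say that $\Vol$ is a \emph{basic} form for the principal bundle $\SO_2(SM)\to{}SM$, hence descends to $SM$; the third identifies the descended form with (\ref{volijk}).

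First I would identify $Z$, given by (\ref{Z}), as the fundamental vector field of the structural $\SO(2)$-action, i.e. the generator of the rotation of the pair $(e_1,e_2)$ in the plane $\dirsupp^\perp$ while the preferred vector $e_3=\dirsupp$ is fixed, see (\ref{en}). Reading off $\dot{e}^i_1=Z(e^i_1)=-e^i_2$ and $\dot{e}^i_2=Z(e^i_2)=e^i_1$ from (\ref{Z}), the flow integrates to $e_1(t)=\cos{t}\,e_1-\sin{t}\,e_2$, $e_2(t)=\sin{t}\,e_1+\cos{t}\,e_2$, and $e_3(t)=e_3$, an $\SO(2)$-rotation of the frame. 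The dual coframe transforms by the inverse matrix, again lying in $\SO(2)$, so that $\omega^1\wedge\omega^2\wedge\omega^3$ gets multiplied by that determinant, which equals $1$; hence $\cL_Z\Vol=0$. (Equivalently, $\omega^3=\omega_H$ is unchanged, depending only on $e_3=\dirsupp$, while $\omega^1\wedge\omega^2$ is the area element of $\dirsupp^\perp$, invariant under a planar rotation.)

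Next comes horizontality: each $\omega^a=\omega^a_i\,dx^i$, see (\ref{omegaaLoc}), involves only the transplanted base differentials $dx^i$, whereas $Z$ is vertical, being a combination of the $\partial/\partial{e^i_a}$, which annihilate the $dx^i$. Thus $\iota_Z\omega^a=\omega^a_i\,dx^i(Z)=0$ for every $a$, whence $\iota_Z\Vol=0$. Combined with $\cL_Z\Vol=0$, the form $\Vol$ is basic and descends to a $3$-form $\vol$ on $SM$. It then remains to compute its coefficient: expanding, $\Vol=\det(\omega^a_i)\,dx^1\wedge{}dx^2\wedge{}dx^3$. Viewing (\ref{gab}) together with (\ref{eaLoc}) as the matrix identity $\rg_{ij}\,e^i_a\,e^j_b=\delta_{ab}$ and taking determinants gives $\det(\rg_{ij})\,(\det(e^i_a))^2=1$. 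Since $(\omega^a_i)=(e^i_a)^{-1}$ and the special frames are positively oriented, $\det(\omega^a_i)=(\det(e^i_a))^{-1}=\sqrt{\det(\rg_{ij})}$, so that $\Vol=\sqrt{\det(\rg_{lm})}\;dx^1\wedge{}dx^2\wedge{}dx^3=\frac{1}{6}\sqrt{\det(\rg_{lm})}\,\epsilon_{ijk}\,dx^i\wedge{}dx^j\wedge{}dx^k$, which is exactly (\ref{volijk}).

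The argument is essentially routine linear algebra together with the definition of a basic form, so I expect no genuine obstacle. The only points that need care are the choice of the positive square root, justified by the positive orientation of the $\rg$-orthonormal frames, and the remark that $\det(\omega^a_i)=\sqrt{\det(\rg_{ij})}$ is \emph{frame-independent} (depending only on $(x,\dirsupp)\in{}SM$), which re-confirms the descent and shows that $\vol_{ijk}$ is well defined on $SM$. Beyond keeping the index and orientation conventions consistent, nothing delicate arises.
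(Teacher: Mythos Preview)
Your argument is correct, and the computation of the local coefficient $\sqrt{\det(\rg_{ij})}$ is essentially identical to the paper's. The one genuine difference lies in how invariance under the $\SO(2)$-flow is established. You argue directly from the flow of $Z$: the frame rotates by an element of $\SO(2)$, the coframe by its inverse, and $\omega^1\wedge\omega^2\wedge\omega^3$ picks up the determinant $1$, whence $\cL_Z\Vol=0$. The paper instead computes $d\Vol$ via the Chern structure equations~(\ref{Omega=0}) and the almost-metricity relation~(\ref{omegaabSym}), obtaining $d\Vol=A^a_{\;ab}\,\omega^{\barb}\wedge\Vol$, and then contracts with $Z$ using~(\ref{dualOrthonormalBases}) to conclude $\iota_Z d\Vol=0$. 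Your route is more elementary and self-contained, bypassing the Cartan tensor and the connection altogether; the paper's route, while heavier, yields the explicit formula for $d\Vol$ as a by-product, which fits the Finsler-geometric machinery used elsewhere. Either way, $\iota_Z\Vol=0$ together with $\iota_Z d\Vol=0$ (equivalently $\cL_Z\Vol=0$) gives the integral invariance, and the descent to $SM$ follows.
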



\begin{proof}
In view of (\ref{dualOrthonormalBases}), we clearly have $\Vol(Z)=0$. Using, for example, the Chern connection, we find, with the help of the structure equations (\ref{Omega=0}) that $d\Vol=-\omega^{\;a}_a\wedge\Vol$. Then, Equation (\ref{omegaabSym}) readily implies $d\Vol=A^a_{ab}\,\omega^\barb\wedge\Vol$. Again, (\ref{dualOrthonormalBases}) entails that $(d\Vol)(Z)=0$, thus $Z\in\ker(\Vol)\cap\ker(d\Vol)$, proving that $\Vol$ is an $\SO(2)$-integral invariant.

Locally, we have $\vol=\det(\omega^a_i)\,dx^1\wedge{}dx^2\wedge{}dx^3=\sqrt{\det{(\rg_{ij})}}\,dx^1\wedge{}dx^2\wedge{}dx^3$, since~(\ref{gab}) can be rewritten as $\omega^a_i\omega^b_j\,\delta_{ab}=\rg_{ij}$. This proves Equation~(\ref{volijk}).
\end{proof} 

Let us now regard the $S_{ab}$ as the frame-components of a skew-symmetric tensor, $S$, on~$SM$, with components $S_{ij}=S_{ab}\,\omega^a_i\omega^b_j$. Then, owing to (\ref{mu0}), we easily find $S_{ij}=s\,\epsilon_{AB}\,\omega^A_i\omega^B_j=2s\,\omega^1_{[i}\omega^2_{j]}$; those turn out to be nothing but the components of the tensor $S=s\,\vol(\he_3)=s\,\omega^1\wedge\omega^2$. Whence the

\begin{lem}
If we call spin tensor, associated with the moment (\ref{mu0}), the tensor
\begin{equation}
S=s\,\vol(\hdirsupp),
\label{SpinTensor}
\end{equation}
where $\hdirsupp =\he_3$, see (\ref{orthonormalBases}), then
\begin{equation}
S_{ij}=s\,\vol_{ijk}\,\dirsupp^k,
\label{Sij}
\end{equation}
for all $i,j=1,2,3$, where the $\vol_{ijk}$ are as in (\ref{volijk}).
\end{lem}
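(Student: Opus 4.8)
The plan is to obtain \eqref{Sij} by computing the interior product $\vol(\hdirsupp)$ directly from the component expression for $\vol$ supplied by the preceding lemma, and then reading off the coordinate components of the resulting $2$-form. First I would note that, since here $n=3$ and $e_n=\dirsupp$, formula \eqref{eni} gives $\he_3=e^i_3\,\delta/\delta{}x^i=\dirsupp^i\,\delta/\delta{}x^i$, so that $\hdirsupp=\he_3$ is just the horizontal lift of the supporting element; the only other ingredient needed is the $\mathrm{hv}$-frame duality $dx^i(\delta/\delta{}x^j)=\delta^i_j$. Writing $\vol=\frac{1}{6}\vol_{ijk}\,dx^i\wedge{}dx^j\wedge{}dx^k$ with $\vol_{ijk}$ totally antisymmetric, the standard contraction formula for a $3$-form then yields $\vol(\hdirsupp)=\frac{1}{2}\vol_{ijk}\,\dirsupp^i\,dx^j\wedge{}dx^k$. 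Multiplying by $s$ and adopting the convention $S=\frac{1}{2}S_{ij}\,dx^i\wedge{}dx^j$ gives $S_{jk}=s\,\vol_{ijk}\,\dirsupp^i$; using the cyclic symmetry $\vol_{ijk}=\vol_{jki}$ to move the contracted index into last position and relabeling then produces exactly $S_{ij}=s\,\vol_{ijk}\,\dirsupp^k$, which is \eqref{Sij}.

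As a cross-check, consistent with the frame computation already sketched just before the statement, I would verify the identification $S=s\,\vol(\he_3)=s\,\omega^1\wedge\omega^2$. This follows from the duality $\omega^a(\he_b)=\omega^a_i\,e^i_b=\delta^a_b$ together with $\omega^a(\he_{\barb})=0$, which give $\vol(\he_3)=\omega^1(\he_3)\,\omega^2\wedge\omega^3-\omega^2(\he_3)\,\omega^1\wedge\omega^3+\omega^3(\he_3)\,\omega^1\wedge\omega^2=\omega^1\wedge\omega^2$. The volume-contracted and frame descriptions agree because $\vol_{ijk}=\epsilon_{abc}\,\omega^a_i\omega^b_j\omega^c_k$, whence $\vol_{ijk}\,\dirsupp^k=\epsilon_{abc}\,\omega^a_i\omega^b_j\,(\omega^c_k\dirsupp^k)=\epsilon_{ab3}\,\omega^a_i\omega^b_j=\epsilon_{AB}\,\omega^A_i\omega^B_j$, where I have used $\omega^c_k\dirsupp^k=\omega^c(\dirsupp)=\omega^c(e_3)=\delta^c_3$ and $A,B=1,2$. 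This recovers the expression $S_{ij}=s\,\epsilon_{AB}\,\omega^A_i\omega^B_j$ obtained from \eqref{mu0}.

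I expect no conceptual obstacle: the substance is a one-line interior product of a $3$-form with a vector field. The only things to watch are purely bookkeeping — the normalization factor $\frac{1}{2}$ emerging from contracting a degree-three form, the cyclic relabeling $\vol_{ijk}\mapsto\vol_{jki}$ needed to land precisely on the index pattern of \eqref{Sij}, and confirming that the frame components coming from \eqref{mu0} coincide with the volume-contracted expression via $\vol_{ijk}=\epsilon_{abc}\,\omega^a_i\omega^b_j\omega^c_k$.
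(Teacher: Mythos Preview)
Your proof is correct and complete. The paper itself does not supply a separate proof of this lemma: the statement follows immediately from the discussion just preceding it, where the coordinate components $S_{ij}=S_{ab}\,\omega^a_i\omega^b_j=s\,\epsilon_{AB}\,\omega^A_i\omega^B_j$ are computed from \eqref{mu0} and then identified with those of $s\,\vol(\he_3)=s\,\omega^1\wedge\omega^2$; this is precisely the frame-component route you present as your cross-check. Your primary argument, the direct interior-product computation of $\vol(\hdirsupp)$ in coordinates, is the natural direction for the lemma as stated and supplies the detail the paper leaves implicit; the two derivations are equivalent and you have verified their agreement via $\vol_{ijk}=\epsilon_{abc}\,\omega^a_i\omega^b_j\omega^c_k$.
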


\subsubsection{Laws of geometrical spinoptics in Finsler-Cartan spaces}
  
We are now ready to determine the explicit expression of the characteristic foliation of the $2$-form (\ref{sigmamu0}) that will provide us with the differential equations governing the trajectories of spinning light in a Finsler-Cartan background.



\begin{lem}\label{hOmegaSLemma}
The spin-curvature coupling term for the Cartan connection retains the form
\begin{equation}
\hOmega(S)=\half\hR(S)_{cd}\,\omega^c\wedge\omega^d+\hP(S)_{cd}\,\omega^c\wedge\omega^\bard+\half\hQ(S)_{cd}\,\omega^\barc\wedge\omega^\bard
\label{hOmegaS}
\end{equation}
where $\hR(S)_{cd}=\hR_{abcd}\,S^{ab}$, etc., and
\begin{eqnarray}
\label{hRS}
\hR(S)_{cd}&=&R(S)_{cd},\\
\label{hPS}
\hP(S)_{cd}&=&P(S)_{cd}=2(A^{}_{cda|b}-A^{}_{ace}\,\dot{A}^e_{bd})S^{ab},\\
\label{hQS}
\hQ(S)_{cd}&=&-2A^{}_{aec}\,A^e_{\;bd}\,S^{ab}.
\end{eqnarray}
\end{lem}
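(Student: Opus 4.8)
The plan is to start from the general decomposition of the Cartan curvature $2$-form obtained earlier, namely Equations~(\ref{hOmegaGeneral})--(\ref{hQ}), and simply contract everything with the spin tensor $S^{ab}$. Since $\hOmega(S)=\hOmega_{ab}\,S^{ab}$ by definition, applying the contraction to~(\ref{hOmegaGeneral}) immediately produces the three-term structure displayed in~(\ref{hOmegaS}), with $\hR(S)_{cd}=\hR_{abcd}\,S^{ab}$, $\hP(S)_{cd}=\hP_{abcd}\,S^{ab}$, and $\hQ(S)_{cd}=\hQ_{abcd}\,S^{ab}$. So the only real content of the lemma is to evaluate these three contractions using the explicit form~(\ref{mu0}) of the moment, i.e. $S^{ab}=s\,\epsilon^{AB}\delta^a_A\delta^b_B$, which means $S^{ab}$ has nonzero components only in the $1,2$ directions and, crucially, $S^{a3}=0$.

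For the first term, I would contract~(\ref{hR}) to get $\hR(S)_{cd}=R(S)_{cd}+A^a_{\;be}\,R^e_{\;cd}\,S^{ab}$. The claim~(\ref{hRS}) is that the correction vanishes. The key observation is that $R^e_{\;cd}=R^{\;e}_{n\;cd}$ carries the supporting-element index $n=3$, and contracting $A^a_{\;be}\,S^{ab}=A_{abe}\,S^{ab}$ — where $S^{ab}$ vanishes unless both indices lie in $\{1,2\}$ — one then needs to see that the term drops. I expect this to follow either because the free index structure after contraction forces an $A_{AB e}$ with the spin directions, or more directly because the antisymmetrization built into $S^{ab}$ combined with the symmetry $A_{abe}=A_{(abe)}$ kills it; the precise mechanism is what I would need to pin down, but the relation $A_{ab3}=0$ from~(\ref{Al=0}) (already used in the preceding proposition) is the likely culprit ensuring the correction dies.

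For the $\hP$ term, contracting~(\ref{hP}) gives $\hP(S)_{cd}=P(S)_{cd}+A^a_{\;bd|c}\,S^{ab}-A^a_{\;be}\dot{A}^e_{\;cd}\,S^{ab}$, and I would assemble this into the stated form~(\ref{hPS}) by writing $P(S)_{cd}=P_{abcd}\,S^{ab}$, recalling from Lemma~\ref{BianchiChern} that $P_{abc}=-\dot{A}_{abc}$ so that the $P$-contraction and the $\dot A$-contraction combine, and then using the total symmetry of $A$ together with the skew-symmetry of $S^{ab}$ to merge the $A_{cda|b}$ and $A_{ace}\dot A^e_{bd}$ contributions into the single expression $2(A_{cda|b}-A_{ace}\dot A^e_{bd})S^{ab}$. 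Lastly, the $\hQ$ term comes straight from~(\ref{hQ}): $\hQ(S)_{cd}=2A^a_{\;e[c}A^e_{\;d]b}\,S^{ab}$, and since $S^{ab}$ is already antisymmetric, the explicit antisymmetrization bracket on $[cd]$ can be absorbed, and a relabeling yields the compact form $-2A_{aec}A^e_{\;bd}\,S^{ab}$ of~(\ref{hQS}).

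The main obstacle I anticipate is the bookkeeping of index symmetries: showing that the apparent $A$-correction to $R$ genuinely vanishes, and correctly combining the Bianchi relation $P_{abc}=-\dot A_{abc}$ with the explicit $\hP$-formula so that the two $\dot A$-type contributions assemble with the right coefficient $2$. These are purely algebraic manipulations driven by the total symmetry $A_{abc}=A_{(abc)}$, the skew-symmetry $S^{ab}=-S^{ba}$, and the vanishing $A_{ab3}=0$; no new geometric input is needed beyond the curvature decomposition~(\ref{hOmegaGeneral}) and Lemma~\ref{BianchiChern}.
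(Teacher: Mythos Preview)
Your plan for the overall decomposition and for the $\hR$ and $\hQ$ contractions is correct and matches the paper. The gap is in your treatment of the $\hP$ term, in two respects.

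First, the two correction terms you write down, $A^a_{\;bd|c}\,S^{ab}$ and $A^a_{\;be}\,\dot A^e_{\;cd}\,S^{ab}$, actually \emph{vanish} by precisely the mechanism you identified for the $\hR$ correction: after lowering the index $a$, each is of the form $A_{ab\ldots}S^{ab}$ with $A$ totally symmetric and $S$ skew. Hence $\hP(S)_{cd}=P(S)_{cd}$ drops out immediately; there is nothing for these terms to ``combine'' with.

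Second, and this is the real gap, Lemma~\ref{BianchiChern} cannot carry the remaining step. That lemma gives only the trace $P_{ijk}=\dirsupp^l P_{lijk}=-\dot A_{ijk}$, i.e.\ the contraction of $P_{abcd}$ on its \emph{first} index with $e_n=e_3$. But in $P(S)_{cd}=P_{abcd}\,S^{ab}$ the first two indices $a,b$ lie in $\{1,2\}$ (since $S^{a3}=0$), so the trace identity simply does not apply to these components. What the paper uses instead is the full constitutive expression for the Chern $\mathrm{hv}$-curvature in terms of the Cartan tensor (Equation~(3.4.11) of~\cite{BCS}),
\[
P_{abcd}=A_{acd|b}-A_{bad|c}-A_{cbd|a}+A^{}_{bam}\dot A^m_{\;cd}-A^{}_{acm}\dot A^m_{\;bd}+A^{}_{bcm}\dot A^m_{\;ad}.
\]
Contracting this with $S^{ab}$, the two terms carrying $A_{ba\ldots}$ vanish by symmetry/antisymmetry, and the remaining four pair up under $a\leftrightarrow b$ to yield exactly $2(A_{cda|b}-A_{ace}\dot A^e_{\;bd})S^{ab}$. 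This external identity is the missing ingredient in your sketch.
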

\begin{proof}
The frame-components $\hR(S)_{cd}$, and $\hP(S)_{cd}$, for the Cartan connection exactly match their counterpart for the Chern connection with curvature tensors $R^{\;i}_{j\;kl}$ and $P^{\;i}_{j\;kl}$ given by (\ref{R}) and (\ref{P}) respectively; indeed, the equations~(\ref{hRS}), and~(\ref{hPS}) are derived, in a straightforward way, using~(\ref{hR}), and (\ref{hP}), together with the total skew-symmetry (resp. symmetry) of the spin tensor, (resp. the Cartan tensor), i.e., $S^{ab}=S^{[ab]}$ (resp. $A_{abc}=A_{(abc)}$).

\goodbreak

The constitutive equation (Equation (3.4.11) in \cite{BCS}) for the $\mathrm{hv}$-compo\-nents of the Chern curvature in terms of the covariant derivatives of the Cartan tensor (see~(\ref{nablaellBis}), and (\ref{dotA})), reads
$
P_{abcd}=A_{acd|b}-A_{bad|c}-A_{cbd|a}+A^{}_{bam}\dot{A}^m_{cd}-A^{}_{acm}\dot{A}^m_{bd}+A^{}_{bcm}\dot{A}^m_{ad}.
$
We readily deduce that $P(S)_{cd}=P_{abcd}\,S^{ab}=2(A^{}_{cda|b}-A^{}_{ace}\dot{A}^e_{bd})S^{ab}$, proving the last part of Equation (\ref{hPS}). 

The last equation (\ref{hQS}) is a trivial consequence of (\ref{hQ}).
\end{proof}

\begin{lem}\label{PS3a=0}
There holds $P(S)_{a3}=P(S)_{3a}=0$, for all $a=1,2,3$.
\end{lem}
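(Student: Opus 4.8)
The plan is to reduce everything to two vanishing properties of the Cartan tensor relative to the distinguished index $3=n$ attached to the supporting direction $\dirsupp=e_n$, and then to read off the result from the explicit formula \ref{hPS} for $P(S)_{cd}$ (supplemented, for one of the two cases, by \ref{Pl=0}). The two properties are: (a) every frame-component of $A$ carrying an index $3$ vanishes, and (b) every horizontal covariant derivative $A_{\dots|b}$ of such a component vanishes as well.

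Property (a) is immediate: by \ref{Al=0} we have $A_{ijk}\dirsupp^k=0$, and since $e_n=\dirsupp$ (see \ref{en}, \ref{eni}) this reads $A_{ab3}=0$ in the frame; the total symmetry $A_{abc}=A_{(abc)}$ then kills any component with a $3$ in any slot. Property (b) is the crux, and I would obtain it by differentiating the tensorial identity $A_{ijk}\dirsupp^k=0$ horizontally and invoking the horizontal parallelism of $\dirsupp$, namely $\dirsupp^i_{|j}=0$ from \ref{nablaellBis}: this yields $A_{ijk|l}\dirsupp^k=0$, i.e. $A_{ab3|d}=0$ in the frame; since $A_{ijk|l}$ is still totally symmetric in $ijk$, any covariant derivative with a $3$ among its first three Cartan-indices vanishes, in particular $\dot A_{ab3}=A_{ab3|n}=0$ (recall \ref{dotA}).

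Granting (a) and (b), both families follow by inspection of \ref{hPS}, namely $P(S)_{cd}=2(A_{cda|b}-A_{ace}\dot A^e_{bd})S^{ab}$. For the vertical index equal to $3$, i.e. $P(S)_{c3}$: the term $A_{c3a|b}$ vanishes by (b), and the term $A_{ace}\dot A^e_{b3}$ vanishes since $\dot A^e_{b3}=\dot A_{eb3}$ vanishes by (b); hence $P(S)_{a3}=0$. (This case is in fact instantaneous from \ref{Pl=0}: in the frame the vertical slot is precisely the $\dirsupp$-contracted slot of the $\mathrm{hv}$-curvature, so $P_{abc3}=0$ and $P(S)_{c3}=P_{abc3}S^{ab}=0$.) For the horizontal index equal to $3$, i.e. $P(S)_{3d}$: the term $A_{3da|b}=A_{da3|b}$ vanishes by (b), while $A_{a3e}\dot A^e_{bd}$ vanishes already through the undifferentiated factor $A_{a3e}=0$ by (a); hence $P(S)_{3a}=0$.

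The only non-formal point, and hence the main obstacle, is property (b). The tempting shortcut ``$A_{ab3}=0$ identically, so its derivative is $0$'' is invalid, because the frame index $3$ labels the \emph{varying} direction $\dirsupp$, so $A_{ab3}$ is not a function whose connection-corrected derivative is automatically zero. The legitimate route is precisely to differentiate the covariant identity $A_{ijk}\dirsupp^k=0$ and use $\dirsupp^i_{|j}=0$; once (b) is secured, the remainder is pure index bookkeeping built on the total symmetry of $A$.
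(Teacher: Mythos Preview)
Your proof is correct and follows essentially the same route as the paper's. The paper handles $P(S)_{c3}=0$ directly via \eqref{Pl=0}, and for $P(S)_{3a}=0$ it appeals to \eqref{hPS} together with $A_{a3e}=0$ (from \eqref{Al=0}) and $(e_3)^i_{|j}=0$ (from \eqref{nablaellBis}); your property~(b) is precisely the unpacking of this last ingredient, and your careful remark that the vanishing of $A_{ab3}$ does not by itself imply the vanishing of its horizontal derivative is a useful clarification that the paper leaves implicit.
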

\begin{proof}
We have $P(S)_{c3}=0$, because of (\ref{Pl=0}). Likewise, the relation $P(S)_{3a}=0$ stems from (\ref{hPS}) since $A_{a3e}=0$ (see~(\ref{Al=0})), and $(e_3)^i_{|j}=0$ (see~(\ref{nablaellBis})).
\end{proof}

We can now proclaim our main result.

\begin{thm}\label{MainTheorem}
The characteristic foliation of the $2$-form $\sigma_{\mu_0}$ of $SM$, given by~(\ref{sigmamu0}), is expressed as follows, viz.,
\begin{equation}
\begin{array}{c}
X\in\ker(\sigma_{\mu_0})\\[10pt]
\Updownarrow\\[10pt]
\begin{array}{ll}
X=X^3&\bigg[
\bigg(\dirsupp^i
+
\displaystyle
\frac{1}{2s\Sigma}\,S^{\;\;i}_jR(S)^{\;\;j}_k\,\dirsupp^k\bigg)\frac{\delta}{\delta{x^i}}\\[14pt]
&
\displaystyle
+\frac{1}{2s^2\Delta\Sigma}\left(S^{\;\;i}_j\left[p\,\delta^{\;j}_k-\half\,P(S)^{\;\;j}_k\right]S^{\;\;k}_l{}R(S)^{\;\;l}_m\,\dirsupp^m\right)\frac{\partial}{\partial{\dirsupp^i}}
\bigg]
\end{array}
\end{array}
\label{kersigmamu0Bis}
\end{equation}
for some $X^3\in\bbR$, where the $S^{\;i}_j=s\,\vol^{\;\;i}_{j\ k}\dirsupp^k$ are as in (\ref{Sij}), and
\begin{eqnarray}
\label{DeltaBis}
\Delta
&=&
s\left[1-\frac{1}{4s^2}\,\hQ(S)(S)\right],\\[6pt]
\nonumber
\label{SigmaBis}
\Sigma
&=&
\frac{1}{\Delta}\left[p^2-\half{}p\,P(S)_{ij}\,\rg^{ij}
+
\frac{1}{8s^2}P(S)_{ik}\,P(S)_{jl}\,S^{ij}S^{kl}\right]\\[6pt]
&&
+\frac{1}{4s}\,R(S)(S),
\end{eqnarray}
with $R(S)(S)=R(S)_{ij}\,S^{ij}$, and $\hQ(S)(S)=\hQ(S)_{ij}\,S^{ij}$.

The $2$-form $\sigma_{\mu_0}$ endows $SM\setminus(\Delta^{-1}(0)\cup\Sigma^{-1}(0))$ with a presymplectic structure of rank $4$; the foliation (\ref{kersigmamu0Bis}) leads to a spin-induced deviation from the geodesic spray (\ref{geodesicSprayBis}), and, according to Axiom \ref{AxiomSpinoptics}, governs spinoptics in a $3$-dimensional Finsler-Cartan structure $(M,F)$.
\end{thm}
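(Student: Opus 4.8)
The plan is to work entirely in the $hv$-frame (\ref{orthonormalBases}) restricted to $SM$, turn the computation of $\ker(\sigma_{\mu_0})$ into a $5\times5$ linear problem, and return to the coordinate basis $\delta/\delta x^i,\partial/\partial\dirsupp^i$ only at the very end. On $SM$ one has $F=1$, hence $\omega^{\barn}=0$ by (\ref{omegabn}); feeding the decomposition of Lemma \ref{hOmegaSLemma} into (\ref{sigmamu0}), discarding every factor $\omega^{\bar3}$, and using Lemma \ref{PS3a=0} to confine the horizontal index of the mixed term to $\{1,2\}$, I would obtain
\[
\sigma_{\mu_0}=\frac14 R(S)_{cd}\,\omega^c\wedge\omega^d+\frac12 P(S)_{CD}\,\omega^C\wedge\omega^{\bar D}+p\,\delta_{AB}\,\omega^{\bar A}\wedge\omega^B-\Delta\,\omega^{\bar1}\wedge\omega^{\bar2},
\]
where capital indices run over $\{1,2\}$. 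The decisive point is the purely vertical block: the antisymmetric $2\times2$ coefficient $\frac14\hQ(S)_{cd}-\frac12 S_{cd}$ equals $-\frac12\Delta\,\epsilon_{cd}$ once one uses $\hQ(S)_{12}=\frac1{2s}\hQ(S)(S)$ together with (\ref{DeltaBis}). This is precisely where $\Delta$ enters the story.

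Writing $X=X^a\he_a+X^{\bar A}\he_{\bar A}$ and imposing $\iota_X\sigma_{\mu_0}=0$, I would read off five scalar equations as the vanishing of the coefficients of $\omega^1,\omega^2,\omega^3,\omega^{\bar1},\omega^{\bar2}$. The two \emph{vertical} equations contain the nondegenerate block $-\Delta\,\omega^{\bar1}\wedge\omega^{\bar2}$, so whenever $\Delta\neq0$ they determine $X^{\bar A}$ uniquely as $\Delta X^{\bar A}=-\epsilon_{AB}\!\left[pX^B-\frac12 P(S)_{CB}X^C\right]$. Substituting these into the two \emph{horizontal} equations (the coefficients of $\omega^1,\omega^2$) produces an inhomogeneous $2\times2$ system $\mathcal{M}_{BA}X^A=-\frac{\Delta}{2}R(S)_{3B}X^3$ whose only source is the geodesic-spray label $X^3$.

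The crux is the evaluation of $\mathcal{M}_{BA}$. Using the elementary $2\times2$ identities $P(S)\,\epsilon\,P(S)^{T}=\det\!\big(P(S)\big)\,\epsilon$ and $P(S)\,\epsilon+\epsilon\,P(S)^{T}=\Tr\!\big(P(S)\big)\,\epsilon$ for the block $P(S)_{AB}$, together with the translations $R(S)_{12}=\frac1{2s}R(S)(S)$, $\Tr\!\big(P(S)\big)=P(S)_{ij}\rg^{ij}$ (by $P(S)_{33}=0$), and $2s^2\det\!\big(P(S)\big)=P(S)_{ik}P(S)_{jl}S^{ij}S^{kl}$, I would show that the four a priori unrelated contributions assemble into $\mathcal{M}_{BA}=-\Delta\Sigma\,\epsilon_{BA}$, with $\Sigma$ exactly the scalar (\ref{SigmaBis}). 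Inverting (for $\Sigma\neq0$) gives $X^A=\frac1{2s\Sigma}S^{AB}R(S)_{B3}X^3$; the remaining equation, the coefficient of $\omega^3$, reduces to $R(S)_{A3}X^A=0$ and is automatically satisfied, since $R(S)_{A3}S^{AB}R(S)_{B3}$ is a symmetric-against-antisymmetric contraction and hence vanishes. This confirms that $\ker(\sigma_{\mu_0})$ is one-dimensional, that $X^3$ is free, and that $\sigma_{\mu_0}$ has rank $4$ off $\Delta^{-1}(0)\cup\Sigma^{-1}(0)$.

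It then remains to translate back via $\he_a=e^i_a\,\delta/\delta x^i$, $\he_{\bar A}=e^i_A\,\partial/\partial\dirsupp^i$ and $e^i_3=\dirsupp^i$ from (\ref{eni}): every frame contraction carrying the index $3$ becomes a contraction with $\dirsupp$, and $S^{AB},R(S),P(S)$ are rewritten through their coordinate counterparts $S^{\;i}_j,R(S)^{\;i}_j,P(S)^{\;i}_j$. The term $X^3\he_3+X^A\he_A$ yields the horizontal part of (\ref{kersigmamu0Bis}), while back-substituting the horizontal solution into $\Delta X^{\bar A}$ produces the chain $S\big[p-\frac12 P(S)\big]S\,R(S)\,\dirsupp$ of its vertical part; switching off the curvature and Cartan terms collapses $X$ to $X^3\he_n$, recovering the geodesic spray (\ref{geodesicSprayBis}). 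I expect the genuine obstacle to lie in this final block of bookkeeping together with the reduction $\mathcal{M}_{BA}=-\Delta\Sigma\,\epsilon_{BA}$: one must keep careful track of index raising by $\rg$, of the transposition ambiguities hidden in $P(S)_{AB}$, and of the repeated use of the $\{1,2\}$-block relations $S_{AB}S^{BC}=-s^2\delta_A^{\;C}$ and $\epsilon_{AB}=\frac1s S_{AB}$, in order to be sure the four scalars collect \emph{exactly} into $\Sigma$ rather than some nearby combination.
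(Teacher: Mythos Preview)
Your proposal is correct and follows essentially the same route as the paper: rewrite $\sigma_{\mu_0}$ in the $\mathrm{hv}$-coframe via Lemmas~\ref{hOmegaSLemma} and~\ref{PS3a=0}, solve the vertical block for $X^{\bar A}$ (producing $\Delta$), substitute into the horizontal block to obtain a $2\times2$ antisymmetric system whose sole coefficient is $\Sigma$, verify that the $\omega^3$-equation is automatic, and translate back to coordinates. The only cosmetic difference is that you restrict to $SM$ at the outset via $\iota^*\omega^{\bar3}=d\log F\vert_{SM}=0$ and $X^{\bar3}=0$, whereas the paper carries a Lagrange multiplier $\lambda$ for the constraint $F=1$ and finds $\lambda=0$; your explicit $2\times2$ identities for $P(S)\,\epsilon\,P(S)^{T}$ and $P(S)\,\epsilon+\epsilon\,P(S)^{T}$ make transparent the step the paper covers with ``we find, with some more effort''.
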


\begin{proof}
Using Lemmas \ref{hOmegaSLemma} and \ref{PS3a=0}, we can rewrite our $2$-form (\ref{sigmamu0}) of $SM$, in the guise of (\ref{sigmamu0Bis}), as
\begin{eqnarray}
\nonumber
\sigma_{\mu_0}
&=&
+p\,\delta_{AB}\,\omega^\barA\wedge\omega^B
-\frac{s}{2}\,\epsilon_{AB}\,\omega^\barA\wedge\omega^\barB\\[6pt]
\label{sigmamu0Ter}
&&
+\frac{1}{4}R(S)_{AB}\,\omega^A\wedge\omega^B
+\half{}R(S)_{A3}\,\omega^A\wedge\omega^3\\[6pt]
&&
\nonumber
+\half{}P(S)_{AB}\,\omega^A\wedge\omega^\barB
+\frac{1}{4}\hQ(S)_{AB}\,\omega^\barA\wedge\omega^\barB.
\end{eqnarray}
The proof of Theorem \ref{ThmGeodesicSpray} is adapted to the new $2$-form (\ref{sigmamu0Ter}) we are dealing with. In particular, the vector fields $X\in\Vect(SM)$ 
will be written in the following form, $X=X^A\,\he_A+X^3\,\he_3+X^\barA\,\he_\barA+X^{\bar{3}}\,\he_{\bar{3}}$. Then $X\in\ker(\sigma_{\mu_0})$ iff $\sigma_{\mu_0}(X)+\lambda\omega^{\bar{3}}=0$, where $\lambda\in\bbR$ is a Lagrange multiplier associated with the constraint $F=1$ defining $SM\hookrightarrow\TMM$.
We find 
\begin{eqnarray*}
\sigma_{\mu_0}(X)+\lambda\omega^{\bar{3}}
&=&
+p\,\delta_{AB}(X^\barA\omega^B-X^B\omega^\barA)-s\,\epsilon_{AB}X^\barA\omega^\barB+\lambda\omega^{\bar{3}}\\[6pt]
&&
+\half\,R(S)_{AB}X^A\omega^B+\half\,R(S)_{A3}(X^A\omega^3-X^3\omega^A)\\[6pt]
&&
+\half\,P(S)_{AB}(X^A\omega^\barB-X^\barB\omega^A)+\half\,\hQ(S)_{AB}X^\barA\omega^\barB,
\end{eqnarray*}
so that $X\in\ker(\sigma_{\mu_0})$ iff
\begin{eqnarray}
\label{I}
0&=&\left[p\,\delta^{A}_B-\half\,P(S)^A_{\;\;B}\right]X^\barB+\half\,R(S)^{\;\;A}_BX^B+\half\,R(S)^{\;\;A}_3X^3,\\[6pt]
\label{II}
&=&\left[p\,\delta^{A}_B-\half\,P(S)^{\;\;A}_B\right]X^B
+\left[s\,\epsilon^{\;\;A}_B-\half\,\hQ(S)^{\;\;A}_B\right]X^\barB,\\[6pt]
\label{III}
&=&
R(S)_{A3}X^A,\\[6pt]
\label{IV}
&=&\lambda.
\end{eqnarray}

Put $R(S)(S)=R_{abcd}\,S^{ab}S^{cd}$, and $\hQ(S)(S)=\hQ_{abcd}\,S^{ab}S^{cd}$,
and consider~(\ref{mu0}) to readily get
\begin{equation}
R(S)_{AB}=\frac{1}{2s}\,R(S)(S)\,\epsilon_{AB}
\qquad
\hbox{and}
\qquad
\hQ(S)_{AB}=\frac{1}{2s}\,\hQ(S)(S)\,\epsilon_{AB},
\label{RSABQSAB}
\end{equation}
for all $A,B=1,2$. Note that we also have $R(S)(S)=s^2R_{ABCD}\,\epsilon^{AB}\epsilon^{CD}$, and $\hQ(S)(S)=-2s^2A_{ACE}A_{BDF}\,\epsilon^{AB}\epsilon^{CD}\delta^{EF}$ (see (\ref{hQ})).

Plugging (\ref{RSABQSAB}) into (\ref{II}), we easily find
\begin{equation}
X^\barA=\frac{1}{\Delta}\,\epsilon^{\;\;A}_B\left[p\,\delta^{B}_C-\half\,P(S)^{\;\;B}_C\right]X^C,
\label{XbarA}
\end{equation}
for all $A=1,2$, where
\begin{equation}
\Delta=s\left[1-\frac{1}{4}\,\hQ_{ABCD}\,\epsilon^{AB}\epsilon^{CD}\right].
\label{Delta}
\end{equation}

We then find, with the help of (\ref{XbarA}), and (\ref{I}), the following relationship
\begin{equation}
\Sigma^{\;\;A}_DX^D
=
-\half\,R(S)^{\;\;A}_3X^3,
\label{SigmaX}
\end{equation}
where $\Sigma^{\;\;A}_D=\Delta^{-1}\left[p\,\delta^{A}_B-\half\,P(S)^A_{\;\;B}\right]
\epsilon^{\;\;B}_C\left[p\,\delta^{C}_D-\half\,P(S)^{\;\;C}_D\right]+\half\,R(S)^{\;\;A}_D$. 

\goodbreak

We clearly have $\Sigma_{AB}+\Sigma_{BA}=0$, hence $\Sigma_{AB}=\Sigma\,\epsilon_{AB}$, and find, with some more effort,
\begin{eqnarray}
\nonumber
\Sigma
&=&
\frac{1}{\Delta}\left[p^2-\half{}p\,P(S)_{AB}\,\delta^{AB}
+
\frac{1}{8}P(S)_{AC}\,P(S)_{BD}\,\epsilon^{AB}\epsilon^{CD}\right]\\[6pt]
\label{Sigma}
&&+\frac{1}{4}s\,R_{ABCD}\,\epsilon^{AB}\epsilon^{CD},
\end{eqnarray}
where $\Delta$ is as in (\ref{Delta}).

\goodbreak

Let us point out that Equation (\ref{III}) trivially holds true in view of the skew-symmetry of $\Sigma_{AB}$; indeed, (\ref{SigmaX}) implies $R(S)_{3A}X^A=\Sigma_{AB}X^AX^B=0$.

Equation (\ref{SigmaX}) then leaves us with
\begin{equation}
X^A=\frac{1}{2\Sigma}\,\epsilon^{\;\;A}_B\,R(S)^{\;\;B}_3X^3,
\label{XA}
\end{equation}
for all $A=1,2$. Let us recall that the latter equation for $X^A$ completely determines~$X^\barA$, via (\ref{XbarA}), the components $X^3$, and $X^{\bar{3}}$ remaining otherwise arbitrary. Now, $X\in\Vect(SM)$ if $X(F)=0$, i.e., $\omega^{\bar{3}}(X)=X^{\bar{3}}=0$. We are, hence, left with only one arbitrary parameter, $X^3$, to define the direction~$\ker(\sigma_{\mu_0})$ wherever $\Delta\neq0$, and $\Sigma\neq0$. Thus 
$X=X^A\he_A+X^3\he_A+X^\barA\he_\barA$, where $X^A$, and $X^\barA$ are as in (\ref{XA}), and~(\ref{XbarA}) respectively, with $X^3\in\bbR$.

\goodbreak

Introducing the unit supporting element, $(\dirsupp^a=\delta^{a}_3)$, 
as well as the spin tensor, $(S^{\;\;a}_b=s\,\epsilon^{\;\;A}_B\delta^{a}_A\delta^{B}_b)$, given in (\ref{mu0}), we find
\begin{equation}
\begin{array}{c}
X\in\ker(\sigma_{\mu_0})\\[10pt]
\Updownarrow\\[10pt]
\begin{array}{ll}
X=X^3\bigg[&
\displaystyle
\frac{1}{2s\Sigma}(S\,R(S)\dirsupp)^A\he_A.
+\he_3\\[10pt]
&
\displaystyle
+\frac{1}{2s^2\Delta\Sigma}\left(p\,S-\half\,S\,P(S)\,S\,R(S)\dirsupp\right)^A\he_\barA
\bigg]
\end{array}
\end{array}
\label{kersigmamu0}
\end{equation}
for some $X^3\in\bbR$.

\goodbreak

To complete the calculation, we express (\ref{kersigmamu0}) in terms of the co\-ordinates $\dirsupp^i=e^i_3$ of the distinguished element $u=e_3$, and those, $S^{ij}=\rg^{ik}\rg^{jl}\,S_{kl}$, of the spin tensor $S$, see~(\ref{SpinTensor}), where the $S_{ij}$ are as in (\ref{Sij}). We also bear in mind that $\he_a=e^i_a\,\delta/\delta{x^i}$, and $\he_\bara=e^i_a\,\partial/\partial{\dirsupp^i}$, for all $a=1,2,3$, as given by (\ref{orthonormalBases}), on the principal bundle $\SO_2(SM)$ above the evolution space $SM$. The upshot of the computation is that the characteristic foliation (\ref{kersigmamu0}) can be recast in the form~(\ref{kersigmamu0Bis}); Equations~(\ref{DeltaBis}), and (\ref{SigmaBis}) also provide alternative expressions for~(\ref{Delta}), and~(\ref{Sigma}). 

At those points $(x,u)\in{}SM$ where $\Delta=0$, or $\Sigma=0$, singularities of the foliation~(\ref{kersigmamu0}) do occur; they must be discarded to guarantee a well-behaved pre\-symplectic structure of (generic) rank $4$. The proof is now complete. 
\end{proof}

\begin{rmk}
{\rm
Let us choose, e.g., $X^3=1$ in (\ref{kersigmamu0Bis}) to define the generator, $X$, of the foliation $\ker(\sigma_{\mu_0})$. The latter significantly deviates from a spray since the velocity, $\dx$, given by the horizontal projection of $X$, differs from the direction, $\dirsupp$, of the supporting element, namely
\begin{equation}
\dx^i=\dirsupp^i
+
\frac{1}{2s\Sigma}\,S^{\;\;i}_jR(S)^{\;\;j}_k\,\dirsupp^k,
\label{anomalousVelocity}
\end{equation}
where $\dx^i=X(x^i)$, for all $i=1,2,3$. The occurrence of this \textit{anomalous velocity} in the presence of curvature can be classically interpreted (see \cite{DHH,DHH2}) as the source of the \textit{optical Hall effect}.
Moreover, the vertical components of the generator $X$, namely those of the geodesic acceleration, depend linearly on the helicity, $\varepsilon=\sign(s)$. They, notably, lead to a splitting, \textit{\`a la} Stern-Gerlach, of light rays with opposite helicities.
}
\end{rmk}

Let us finish with the following corollary of Theorem \ref{MainTheorem} which help us recover the simpler equations of spinoptics in the Riemannian case, derived in \cite{DHH}.

\begin{cor}\label{RiemannCor}
If the Finsler structure, $(M,F)$, is Riemannian, the characteristic foliation of the $2$-form $\sigma_{\mu_0}$ is spanned  by the vector field
\begin{equation}
X=
\bigg(\dirsupp^i
+
\displaystyle
\frac{1}{2\Sigma'}\,S^{\;\;i}_jR(S)^{\;\;j}_k\,\dirsupp^k\bigg)\frac{\delta}{\delta{x^i}}
-\frac{1}{2\Sigma'} R(S)^{\;\;i}_j\,\dirsupp^j\frac{\partial}{\partial{\dirsupp^i}}
\label{kersigmamu0Ter}
\end{equation}
with
\begin{equation}
\Sigma'
=p^2+\frac{1}{4}\,R(S)(S),
\label{Sigma'}
\end{equation}
where the $R_{ijkl}$ are the components of the Riemann curvature tensor.
\end{cor}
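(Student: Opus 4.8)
The plan is to specialize Theorem~\ref{MainTheorem} by using the single structural fact that a Riemannian metric has vanishing Cartan tensor. First I would record that for $F(x,y)=\sqrt{\rg_{ij}(x)\,y^iy^j}$ the function $\half F^2=\half\rg_{ij}(x)y^iy^j$ is quadratic in~$y$, so that $A_{ijk}=F\left(\frac{1}{4}F^2\right)_{y^iy^jy^k}=0$. By~(\ref{omegaCartan}) the Cartan connection then reduces to the Chern (Levi-Civita) connection, and by Lemma~\ref{hOmegaSLemma}, equations~(\ref{hPS}) and~(\ref{hQS}), every spin-curvature block built from $A$ drops out: $P(S)_{cd}=0$ and $\hQ(S)_{cd}=0$, whence $\hQ(S)(S)=0$, while $\hR(S)_{cd}=R(S)_{cd}=R_{abcd}\,S^{ab}$ is now governed by the genuine Riemann tensor.

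Second, I would feed these vanishings into the two scalars of Theorem~\ref{MainTheorem}. From~(\ref{DeltaBis}) one obtains at once $\Delta=s$, and from~(\ref{SigmaBis}), discarding every term containing $P(S)$ or $\hQ(S)$, one is left with $\Sigma=\frac{1}{s}\left[p^2+\frac{1}{4}R(S)(S)\right]=\Sigma'/s$, where $\Sigma'$ is exactly the scalar~(\ref{Sigma'}). This already pins down the normalizing factors appearing in the corollary.

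Third, I would substitute $\Delta=s$, $\Sigma=\Sigma'/s$ and $P(S)=0$ into the foliation~(\ref{kersigmamu0Bis}) and set $X^3=1$. The horizontal prefactor becomes $\frac{1}{2s\Sigma}=\frac{1}{2\Sigma'}$, reproducing verbatim the $\delta/\delta x^i$ part of~(\ref{kersigmamu0Ter}). In the vertical part, the bracket $\left[p\,\delta^j_k-\half P(S)^j_k\right]$ collapses to $p\,\delta^j_k$, leaving a term proportional to $S^{i}_{\;k}S^{k}_{\;l}\,R(S)^{l}_{\;m}\,\dirsupp^m$. The crux of the whole reduction is the algebraic identity
\begin{equation}
S^{i}_{\;k}S^{k}_{\;l}=-s^2\,h^{i}_{\;l},
\qquad
h^{i}_{\;l}=\delta^i_l-\dirsupp^i\dirsupp_l,
\label{S2identity}
\end{equation}
which follows from the expression $S_{ij}=s\,\vol_{ijk}\,\dirsupp^k$ of~(\ref{Sij}), the standard three-dimensional contraction identity for the volume tensor, and $\rg(\dirsupp,\dirsupp)=1$; note that $h^{i}_{\;l}$ is precisely the angular projector appearing in~(\ref{nablaellBis}). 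Combining~(\ref{S2identity}) with the antisymmetry $R(S)_{lm}=-R(S)_{ml}$ --- which forces $R(S)_{lm}\dirsupp^l\dirsupp^m=0$, hence $h^{i}_{\;l}R(S)^{l}_{\;m}\dirsupp^m=R(S)^{i}_{\;m}\dirsupp^m$ --- reduces the vertical term to a scalar multiple of $R(S)^{i}_{\;j}\dirsupp^j$, yielding the $\partial/\partial\dirsupp^i$ part of~(\ref{kersigmamu0Ter}).

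I expect the main obstacle to be exactly this last algebraic reduction: establishing~(\ref{S2identity}) cleanly, so that $S^2$ acts as $-s^2$ times the angular projector transverse to~$\dirsupp$, and then carefully tracking the accumulated scalar factors $s$, $p$, $\Delta$ and $\Sigma$. This is the step where the seemingly unwieldy product $S\,[\,p\,\delta-\half P(S)\,]\,S\,R(S)\,\dirsupp$ of Theorem~\ref{MainTheorem} finally condenses into the single Riemann-curvature term of Corollary~\ref{RiemannCor}; all the remaining manipulations are routine substitutions justified by $A_{ijk}=0$.
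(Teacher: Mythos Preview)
Your proposal is correct and follows essentially the same route as the paper's proof: both rely on the vanishing of the Cartan tensor in the Riemannian case to kill the $P$ and $\hQ$ contributions, and then invoke the identity $S_{jk}S^{ki}=s^2(\dirsupp^i\dirsupp_j-\delta^i_j)$ (your~(\ref{S2identity})) to collapse the vertical term. You are in fact more explicit than the paper, spelling out the reduction of $\Delta$ and $\Sigma$ and the role of the antisymmetry $R(S)_{lm}=-R(S)_{ml}$ in disposing of the $\dirsupp^i\dirsupp_l$ piece of the angular projector.
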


\begin{proof}
Suffice it to note that the Cartan tensor vanishes iff the Finsler structure is Riemannian, hence $P_{ijkl}=\hQ_{ijkl}=0$. The curvature tensor $R_{ijkl}$ in (\ref{kersigmamu0Bis}) then reduces to the Riemann curvature tensor, see (\ref{R}). The proof is completed by noticing that $S_{jk}S^{ki}=s^2(\dirsupp^i\dirsupp_j-\delta^{i}_j)$, a direct consequence of~(\ref{Sij}).
\end{proof}


\section{Conclusion and outlook}\label{SectionConclusion}

We have proposed a generalization of the Fermat Principle enabling us to describe spinning light rays in a general, non dispersive, optical medium, namely an in\-homogeneous and anisotropic medium modeled on a Finsler manifold. The guideline for this extension has been provided by the Euclidean symmetry of the free system, viewed as a generic coadjoint orbit of the Euclidean group, $\SE(3)$. Inter\-action with the optical medium has been justified in terms of a minimal coupling of the model to the (affine) Cartan connection of the Finsler structure; the gist of the procedure lies in the fact that the affine Cartan connection takes, indeed, its values in the Lie algebra $\se(3)$ and, thus, couples naturally to the moment $\mu_0\in\se(3)^*$ defining the original coadjoint orbit (the classical states of the free Euclidean photon). The resulting presymplectic structure on (an open submanifold of) the indicatrix-bundle has been investigated. In particular the characteristic foliation of this structure has been worked out, and shown to yield a system of differential equations governing the trajectories of spinning light rays, associated with a vector field departing from the usual Finslerian geodesic spray. The geodesic acceleration of spinning light rays is due to the coupling of spin with the Finsler-Cartan curvature, which also engenders an anomalous velocity. The latter, already present in Riemannian spinoptics \cite{DHH,DHH2} has proved crucial in the geometrical interpretation of the brand new optical Hall effect, see, e.g., \cite{BB04bis,OMN}. The consubstantial nature of this effect with the geometry of Euclidean coadjoint orbits is precisely what prompted the present study, and our endeavor to depart from the case of isotropic media by taking advantage of Finsler-Cartan structures. Although the characteristic foliation (\ref{kersigmamu0Bis}) of the above-mentioned presymplectic structure is of a formidable complexity, it is nevertheless a mandatory consequence of a minimal, geometrically justified, modification (\ref{FinslerFermatIntro}) of the Hilbert $1$-form (\ref{FinslerFermat2form}) of central importance in Finsler geometry.

\goodbreak

The future perspectives opened by this work are manifold. 

It would be desirable to linearize the differential equations of Finsler spin\-optics in the case of weakly curved Finsler-Cartan manifolds, to account for weakly anisotropic optical media. This should lead to substantial simplifications, suitable for an explicit calculation of the geodesic deviation in several non trivial examples, such as those given by (\ref{Fo}), and (\ref{Fe}). Also, would it be of great importance to compare this linearized set of differential equations with the outcome of the calculations performed by a (short wavelength) semi-classical limit of the Maxwell equations in weakly anisotropic and inhomogeneous media~\cite{BFK}.

The Fermat Principle has, most interestingly, been generalized, via a novel variational calculus, to the case of lightlike geodesics in Finsler spacetimes with a Lorentzian signature \cite{Per}. It would be worth investigating how that relativistic version of geometrical optics extends to spinoptics in relativistic Finsler spacetimes. Let us note that Randers Finsler metrics 
play, as discussed in \cite{CJM}, a prominent r\^ole in such a framework, corresponding to induced (instantaneous) Finsler metrics on the material body of the optical medium.

At last, specific applications of the equations of Finsler-Cartan spinoptics should be explored in a number of other directions such as the Kerr, the Faraday effects, and the Cotton-Mouton effect responsible for plasma birefringence, see, e.g., \cite{SZ}, as well as the photonic Hall effect \cite{Tig} in the presence of a magnetic field.

In truth, the present study of Finsler spinoptics was a challenge, taken up from a purely geometric standpoint; one may, conceivably, expect it will provide further insights into modern trends of geo\-metrical optics of anisotropic media.

\newpage



\end{document}